\crefname{conjecture}{Conjecture}{Conjectures}
\crefname{proposition}{Proposition}{Propositions}
\crefname{lemma}{Lemma}{Lemmata}
\crefname{theorem}{Theorem}{Theorems}
\crefname{section}{Section}{Sections}
\crefname{appendix}{Appendix}{Appendices}
\crefname{figure}{Figure}{Figures}
\crefname{problem}{Problem}{Problems}
\title{Flipping Matchings is Hard\footnote{Work of CB and AT supported in part by MUR PRIN Proj. 2022TS4Y3N - ``EXPAND:~scalable algorithms for EXPloratory Analyses of heterogeneous and dynamic Networked Data''. Work of FM and DP supported in part by~MUR PRIN Proj. 2022ME9Z78 - ``NextGRAAL:~Next-generation algorithms for constrained GRAph visuALization''.}}
\author[1]{Carla Binucci}
\author[1]{Fabrizio Montecchiani}
\author[1]{Daniel Perz}
\author[1]{Alessandra Tappini}
\affil[1]{University of Perugia, Italy\\
\texttt{carla.binucci|fabrizio.montecchiani|daniel.perz|alessandra.tappini@unipg.it}}
\authorrunning{Binucci, Montecchiani, Perz, Tappini}
\newcommand{\flipmat}{\textsc{FlippingBetweenMatchings}\xspace}
\newcommand{\planarvertexcover}{\textsc{PlanarVertexCover}\xspace}
\newcommand{\ps}{\ensuremath{\mathcal{P}}\xspace}
\newcommand{\m}{\ensuremath{\mathcal{M}}\xspace}
\newcommand{\ms}{\ensuremath{\mathcal{M}_1}\xspace}
\newcommand{\mf}{\ensuremath{\mathcal{M}_2}\xspace}
\begin{document}

\maketitle

\begin{abstract}
Given a point set \ps and a plane perfect matching \m on \ps,
a flip is an operation that replaces two edges of \m such that another plane perfect matching on \ps is obtained.
Given two plane perfect matchings on \ps, we show that it is NP-hard to minimize the number of flips that are needed to transform one matching into the other.
\end{abstract}

\section{Introduction}
A \emph{straight-line drawing} $\Gamma$ of a graph $G$ on a point set \ps maps each vertex $v$ of $G$ to a distinct point $p_v$ of \ps and each edge $(u,v)$ of $G$ to the straight-line segment $p_up_v$. 
Such a  drawing is \emph{planar} if no two edges share a point, except at common endpoints. In what follows, we always refer to planar straight-line drawings of graphs.

Given a point set \ps of $n$ points in the plane 
and a family $\mathcal{G}$ of  drawings, an \emph{edge flip} is the operation of replacing an edge of a drawing $\Gamma$ in $\mathcal{G}$ with a different edge such that the resulting drawing is still in $\mathcal{G}$. 
The \emph{flip graph}  of $\mathcal{G}$ is a graph that has a vertex for every element of $\mathcal{G}$ and an edge between two vertices if their corresponding drawings differ by an edge flip. The main questions about flip graphs are their connectedness, their diameter, and the complexity of finding the shortest path between two vertices (i.e., the shortest sequence of edge flips that transforms one drawing into another).  
The connectedness and the diameter have been studied for different families of planar straight-line graph drawings like  triangulations~\cite{hurtado1996flipping, lawson1972transforming, wagner2022connectivity},  spanning trees~\cite{aichholzer2024reconfiguration, avis1996reverse, bjerkevik2025flipping, bousquet2024reconfiguration, bousquet2023note, hernando1999geometric, nichols2020transition},  spanning paths~\cite{aichholzer2023flipping, AKL2007,CW2009,kleist2024connectivity}, and odd matchings~\cite{flips2024oddmatching}.
Regarding the complexity question,
it has been shown that finding the shortest flip sequence between triangulations is not only NP-hard~\cite{amp-fdtsp-15,lubiw2015flip}, but also APX-hard~\cite{PILZ2014589}. On the positive side, there exists an FPT algorithm which uses the length of the flip sequence as a parameter~\cite{kanj2017computing}. 

For simplicity, in what follows we use \emph{plane perfect matching} as a shorthand for planar straight-line drawing of a perfect matching. Note that, for plane perfect matchings, it is necessary to replace at least two edges at the same time to transform one matching into another. In this paper, we refer to this operation as a \emph{flip}, and use the according definition of flip graph.
It has been shown, if the point set is convex, then the flip graph of plane perfect matchings is connected and the shortest flip sequence between two given matchings is computable in polynomial time~\cite{hernando2001graphs}. On the other hand, for point sets in general position, 
whether the flip graph is connected or not is still an open question. 
Further research focuses on non-plane perfect matchings~\cite{biniaz2019flip, da2023longest} or other variants of flip operations~\cite{aichholzer2009compatible, aichholzer2022disjoint,milich2021flips}.


In this paper, we study the problem of deciding whether a plane perfect matching can be transformed into another with at most $k$ flips, called \flipmat:

\medskip\noindent\fbox{%
  \parbox{0.98\textwidth}{
\flipmat\\
\textnormal{\textbf{Input:}} $\langle \ps, \ms,\mf,k \rangle$. A point set \ps, two plane perfect matchings $\ms$ and  $\mf$ on \ps, and a positive integer $k$.\\
\textnormal{\textbf{Question:}} Does there exist a sequence of at most $k$ flips which transforms $\ms$ into $\mf$? 
  }%
}

\medskip
In~\cite{hernando2001graphs}, it is shown that \flipmat can be solved in linear time if \ps is convex, and the authors leave it as an open problem to study the case in which \ps is not convex. 

We solve this problem by proving that \flipmat is NP-hard, even under the additional restriction that the points of \ps have integer coordinates and the area of the minimum-size axis-aligned bounding box containing \ps is polynomial in the size of the matching. Our contribution is summarized by the following theorem.

\begin{theorem}\label{th:flipping-matchings-np-complete}
    \flipmat is NP-complete, even for integer point sets whose area is polynomial in the size of the matching.
\end{theorem}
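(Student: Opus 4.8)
The plan is to establish NP-hardness by a polynomial-time reduction from \planarvertexcover, which the title of the macro suggests is the intended source problem, and then to argue membership in NP separately. Membership in NP is the easy direction: a certificate is simply the flip sequence itself, and since we only need to check that it has length at most $k$ and that each flip produces a valid plane perfect matching on \ps, it suffices to verify that $k$ can be taken polynomially bounded. I would first argue that any solvable instance admits a flip sequence of length polynomial in $n$ (for instance, because each flip can be designed to make progress toward \mf on a bounded-area point set), so that guessing and checking the sequence runs in polynomial time.

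\medskip

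For the hardness direction, the overall strategy is to build, from a planar graph $G$ and integer target $k'$ for vertex cover, a point set \ps together with two matchings \ms and \mf and a flip budget $k$, such that $G$ has a vertex cover of size at most $k'$ if and only if \ms can be transformed into \mf using at most $k$ flips. The key is to design \emph{gadgets}: a vertex gadget for each vertex of $G$ and an edge gadget for each edge, laid out in the plane according to a planar embedding of $G$ so that the whole construction is itself a plane configuration of matching edges with coordinates that are integer and whose bounding box has polynomial area. The intended behaviour is that the local geometry of each gadget forces the flip sequence to encode a choice of which endpoint ``covers'' each edge, with the flip count tied to the number of vertices selected. Concretely, I would arrange that committing a vertex to the cover costs a fixed number of flips in its gadget, while each edge gadget can be resolved cheaply precisely when at least one of its two incident vertex gadgets has been committed; if neither is committed, resolving the edge gadget is either impossible within the budget or strictly more expensive.

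\medskip

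The main steps, in order, would be: (i) fix a planar straight-line drawing of $G$ on a polynomial-size integer grid, which is possible by standard planar grid-drawing results; (ii) replace each vertex and each edge with the corresponding gadget, specifying the two matchings \ms and \mf restricted to each gadget and verifying that the union stays plane and that coordinates remain integer with polynomial bounding-box area; (iii) prove the forward direction, showing that a vertex cover of size $k'$ yields an explicit flip sequence of length exactly the chosen budget $k$; and (iv) prove the reverse direction, showing that any flip sequence of length at most $k$ can be decoded into a vertex cover of size at most $k'$, which requires arguing that flips cannot ``cheat'' by interacting across gadgets or by re-using edges in unintended ways.

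\medskip

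I expect the principal obstacle to be step (iv) together with the geometric realizability in step (ii). The reverse direction is delicate because flips are nonlocal operations in principle: two edges being swapped need not lie in the same gadget, so I must design the geometry so that any beneficial flip is forced to stay local, and so that no short ``shortcut'' sequence exists that resolves the target matching without genuinely encoding a cover. This typically demands a careful \emph{locking} argument, showing that the initial and final matchings differ only within gadgets and that the plane constraint, combined with the budget $k$, prevents edges from different gadgets from ever becoming flip-compatible. Making all of this work while simultaneously keeping the point set on a polynomial-size integer grid is the crux, since the gadgets must be both geometrically rigid enough to force locality and compact enough to fit the area bound.
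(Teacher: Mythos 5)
Your plan matches the paper's strategy at the top level---a reduction from \planarvertexcover with vertex gadgets and edge gadgets, where committing a vertex costs a constant number of flips and an edge gadget is cheap to resolve exactly when an incident vertex gadget has been committed---but the proposal stops where the actual proof begins. Everything that makes the reduction work is absent: the concrete gadget geometry (the paper's flip structure of four edges, blockers of eleven parallel edges each, and separators and vertex separators made of $2k+2$ parallel edges), the exact budget $k=2c+5|E|$, and above all the quantitative lower bounds that implement your step (iv): that resolving an edge gadget using only its own points costs at least $8$ flips, that using a connector of an activated vertex gadget costs exactly $5$, that any matching edge leaving a gadget forces at least $k+1$ flips (so locality is not an assumption but a consequence of the $2k+2$-edge separators together with a crossing-counting lemma), and the final accounting $8|\mathcal{E}_i|+5|\mathcal{E}_o|+2|\mathcal{V}_o|$ that converts any feasible flip sequence into a vertex cover of size at most $c$. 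You correctly flag the ``locking argument'' as the principal obstacle, but that argument \emph{is} the theorem; naming it is not overcoming it, and nothing in the proposal indicates how the geometry would enforce it.

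Two further concrete issues. First, your step (i) starts from a planar straight-line grid drawing of $G$, whereas the paper starts from a weak-visibility representation (vertices as horizontal segments, edges as vertical segments on a quadratic-size integer grid). This is not cosmetic: the gadgets are bundles of axis-parallel segments, and the axis-parallel layout is what allows inserting extra grid rows and columns, keeping integer coordinates, and bounding the area polynomially; along edges of arbitrary slope your gadgets would be far harder, perhaps impossible, to realize under the stated coordinate constraints. Second, your NP-membership argument rests on the claim that every solvable instance admits a flip sequence of polynomial length. That claim is unjustified---even connectivity of the flip graph of plane perfect matchings on general point sets is an open problem, let alone a polynomial diameter bound---so you cannot lean on it. Membership should be argued directly from the certificate being a sequence of at most $k$ flips, each checkable in polynomial time, which is how the paper treats it (and which suffices since the hard instances produced by the reduction have $k$ polynomial in the input size).
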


Due to space limitations, we only describe the main ideas of the reduction; the complete proof is in the appendix.

\section{Proof of \cref{th:flipping-matchings-np-complete}}

Let $\m$ be a plane perfect matching on a point set \ps. Let $e_1$ and $e_2$ be two edges of $\m$ whose endpoints are mapped on a subset $\mathcal{Q}$ of \ps. A \emph{flip of $e_1$ and $e_2$} is an operation that eliminates $e_1$ and $e_2$ and introduces $e_3$ and $e_4$ such that their endpoints are still mapped on $\mathcal{Q}$, no edge crossing is introduced, and no two vertices are mapped to the same point. In other words, a flip operation produces a different plane perfect matching on \ps; see \cref{fig:flip} in which $\mathcal{Q}=\{p_1,p_2,q_1,q_2\}$. 
In what follows, we may use the term \emph{flip} without specifying the involved edges if they are clear from the context, or if we refer to a sequence of flips. 

\begin{figure}[htb]
 \centering
 \includegraphics[page=1]{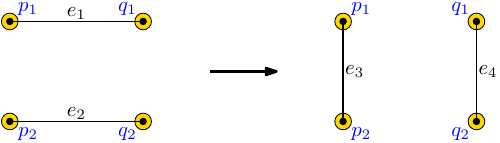}
 \caption{Flipping edges $e_1$ and $e_2$ to $e_3$ and $e_4$; points are drawn as yellow~circles.}
 \label{fig:flip}
\end{figure}


The membership of \flipmat to NP is trivial, as one can guess all possible sequences of $k$ flips, and, for each of them, verify whether it transforms $\ms$ into~$\mf$. To prove hardness, we reduce from the NP-complete problem \planarvertexcover~\cite{DBLP:journals/siamam/GareyJ77}:

\medskip\noindent\fbox{%
  \parbox{0.98\textwidth}{
\planarvertexcover\\
\textnormal{\textbf{Input:}} $\langle  G=(V,E),c \rangle$. A planar graph $G=(V,E)$, and a positive integer $c$.\\
\textnormal{\textbf{Question:}} Does there exist a set of vertices $V_C \subseteq V$, called \emph{vertex cover} of $G$, such that $|V_C| \le c$ and every edge of $G$ has at least one vertex in $V_C$? 
  }%
}

    \begin{figure}[tbp]
      \centering
      \includegraphics[page=2]{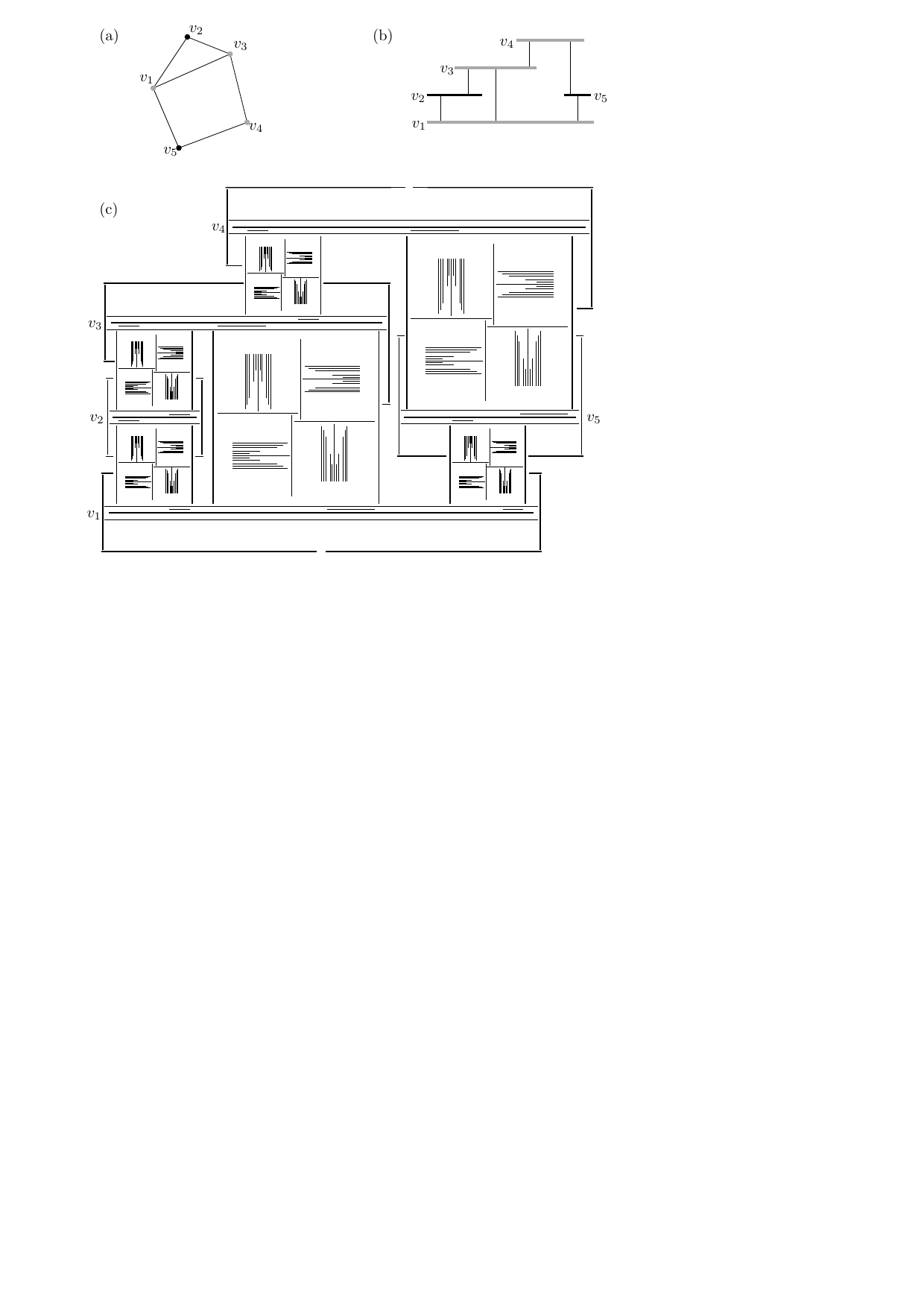}
     \caption{(a) A graph $G$ where a vertex cover of size $3$ is depicted in gray. (b) A weak-visibility representation $R$ of $G$. (c) The plane perfect matching $\ms$ obtained by replacing each vertex-segment of $R$ by a vertex gadget and each edge-segment of $R$ by an edge gadget. (d) The plane perfect matching $\mf$. Each bold line represents $2k+2$ line segments. The vertex $v_3$ is red and the edge $v_1v_3$ is blue in each representation.}
      \label{fig:example}
     \end{figure}

\paragraph*{Construction.}

We begin by describing the construction behind our proof. Given an instance $I=\langle G=(V,E), c\rangle$ of \planarvertexcover, we construct an instance $I'=\langle \ps,\ms,\mf,k \rangle$ of \flipmat, with $k = 2c+5|E|$, as follows; refer to \Cref{fig:example}. 
At high level, in order to obtain $\ms$, the vertices and edges of $G$ are replaced by vertex gadgets and edge gadgets. Also, $\ms$ and $\mf$ only differ in the edge gadgets. Further, the fastest way to reconfigure an edge gadget should be through a constant-length flip sequence involving an edge of a vertex gadget. Our construction goes through a preliminary representation $R$ in which all vertices (edges) of $G$ are drawn as horizontal (vertical) segments.

\medskip\noindent\textbf{From $G$ to $R$.} Since $G$ is planar, it admits a weak-visibility representation~\cite{DBLP:journals/dcg/TamassiaT86}, namely, a representation of $G$ such that: each vertex is represented as a horizontal segment, each edge is represented as a vertical segment whose endpoints, called \emph{attachments}, lie on the horizontal segments of the corresponding endpoints, no two segments intersect each other except possibly at attachment points, all endpoints of horizontal and vertical segments are on an integer grid of quadratic size in the number of vertices. (The representation is called weak because two horizontal segments may see each other even when the corresponding vertices are not adjacent in the graph.)
\cref{fig:example}a shows a graph $G$ with a vertex cover of size three depicted in gray, and \cref{fig:example}b shows a weak-visibility representation $R$ of $G$.

\medskip\noindent\textbf{From $R$ to $\ms$.} Given a weak-visibility representation $R$ of $G$, to obtain a plane perfect matching $\ms$ and to define the corresponding point set~\ps, we replace each vertex-segment of~$R$ by a \emph{vertex gadget} and each edge-segment of $R$ by an \emph{edge gadget}; \cref{fig:example}c shows the drawing $\ms$ obtained from the weak-visibility representation~$R$ of \cref{fig:example}b. 

    An edge gadget consists of: $(i)$ The \emph{flip structure}, which contains four edges each having one endpoint in the central part of the gadget; $(ii)$ four \emph{blockers}, each consisting of eleven parallel edges; $(iii)$ two \emph{separators}, each consisting of $2k+2$ parallel edges. Refer to \cref{fig:configurations} where the edges of a separator are represented as bold blue line segments. 
    Intuitively, blockers prevent any two edges of the flip structure to be flipped within the same operation, while separators prevent any interplay between different edge gadgets.
    
    

    \begin{figure}[tbp]
     \centering
     \begin{minipage}[t]{.45\textwidth}
     \centering
     \includegraphics[page=2,width=.8\textwidth]{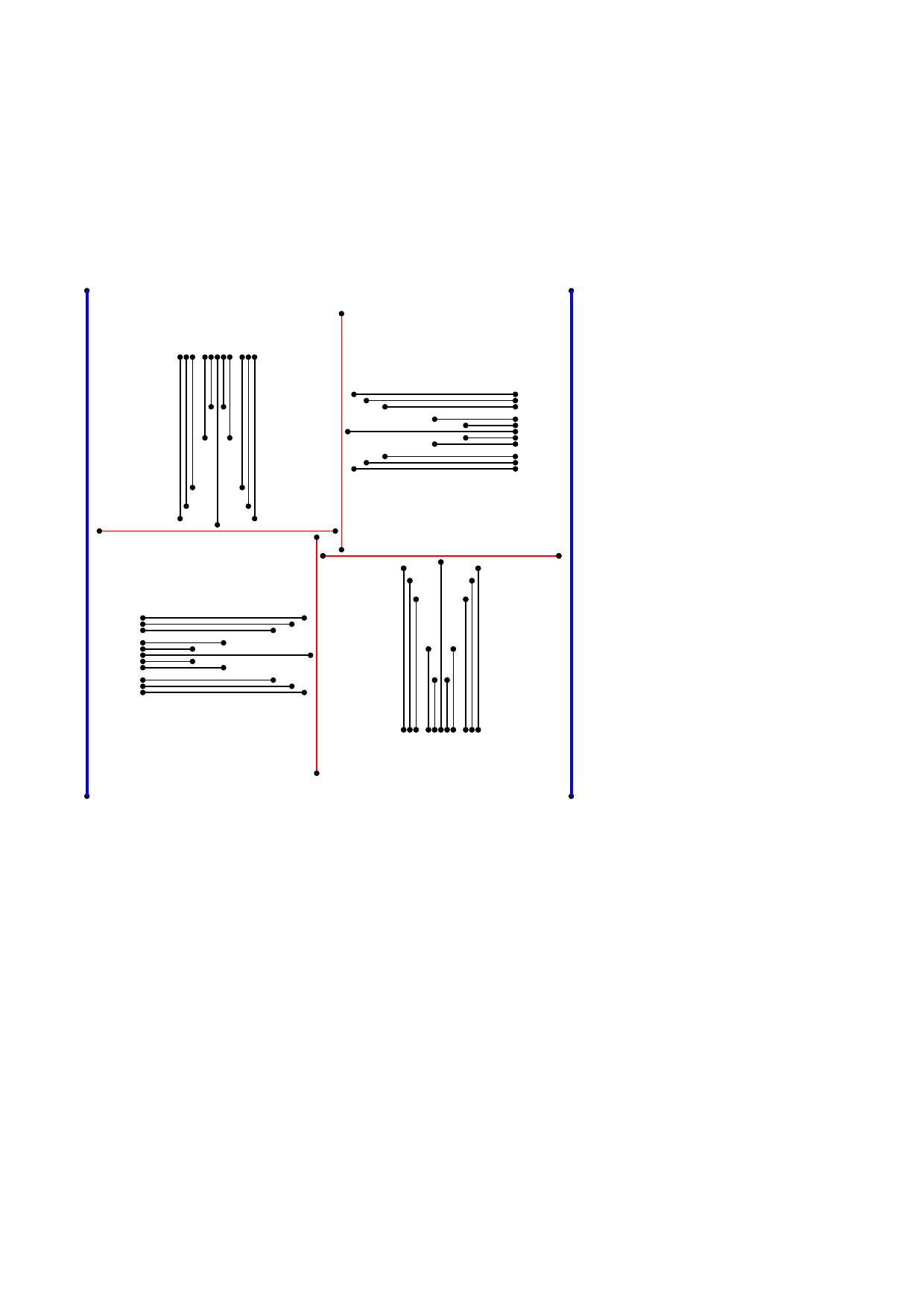}
    \subcaption{Edge gadget in the start-configuration.\label{fig:edge-gadget}}
     \end{minipage}\hfil
     \begin{minipage}[t]{.45\textwidth}
     \centering
    \includegraphics[page=3,width=.8\textwidth]{figures/edge-gadget}
    \subcaption{Edge gadget in the final-configuration.\label{fig:configurations-b}}
     \end{minipage}\hfil
     \caption{Construction of an edge gadget. The edges of the flip structure are colored red, the edges of the blockers are colored black, and the edges of the separators are represented as a blue line segment.}\label{fig:configurations}
    \end{figure}

A vertex gadget associated with a vertex-segment of $R$ representing a vertex of $G$ consists of three parts: One \emph{frame}, depicted in red in \cref{fig:deactivated}, and two \emph{vertex separators}, depicted in blue in \cref{fig:deactivated},  one for each of the two sides of the frame.
The frame consists of several horizontal line segments: $(i)$ The \emph{top-edge} and the \emph{bottom-edge}, which are the topmost edge and the bottommost edge, respectively, and they are the longest ones; $(ii)$ the $2k+2$ \emph{middle-edges}, which are shorter than both the top-edge and the bottom-edge; $(iii)$ the \emph{connectors}, which are edges that lie in the region between the top-edge and the middle-edges or in the region between the bottom-edge and the middle-edges. Each vertex separator consists of $2k+2$ vertical edges next to the frame and $2k+2$ horizontal edges above and $2k+2$ horizontal edges below these vertical edges. \cref{fig:activation} shows an example of a vertex gadget  incident to three edge gadgets. Intuitively, for every incident edge gadget, a connector is placed such that, after flipping the top-edge and the bottom-edge, the connector can be flipped with an edge of the flip structure of the edge gadget.
Vertex separators and middle-edges prevent any interplay between different vertex gadgets.

In what follows, we consider a particular flip operation, shown in \cref{fig:activation}, which transforms a vertex gadget from a \emph{deactivated} configuration (see \cref{fig:deactivated}) to an \emph{activated} configuration   (see \cref{fig:activated}). When performing this flip, we  also say that we \emph{activate} a vertex gadget and, conversely, we \emph{deactivate} a vertex gadget if we perform the reverse~operation.

     \begin{figure}[t]
      \centering
      \begin{minipage}{\textwidth}
      \centering
      \includegraphics[page=4, scale=0.65]{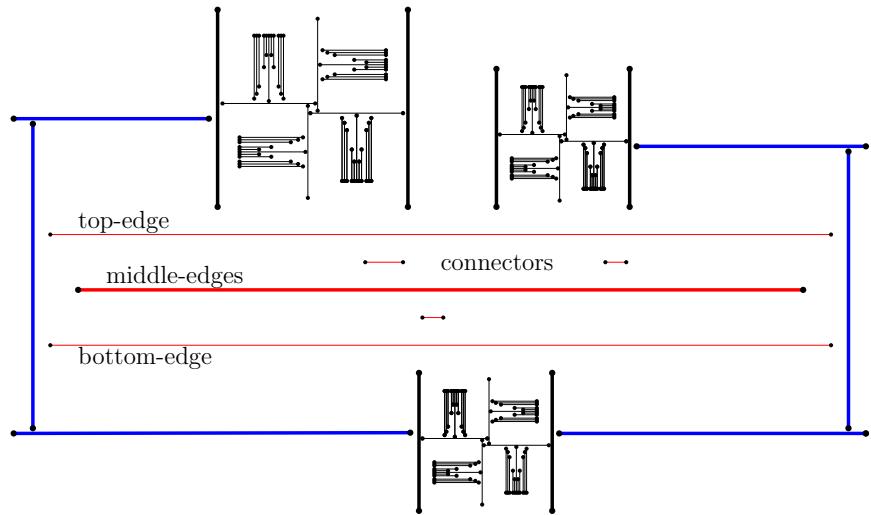}
      \subcaption{Deactivated vertex gadget. The bold edges are a set of $2k+2$ edges.\label{fig:deactivated}}
      \end{minipage}
      \begin{minipage}{\textwidth}
      \centering\includegraphics[page=5,scale=0.65]{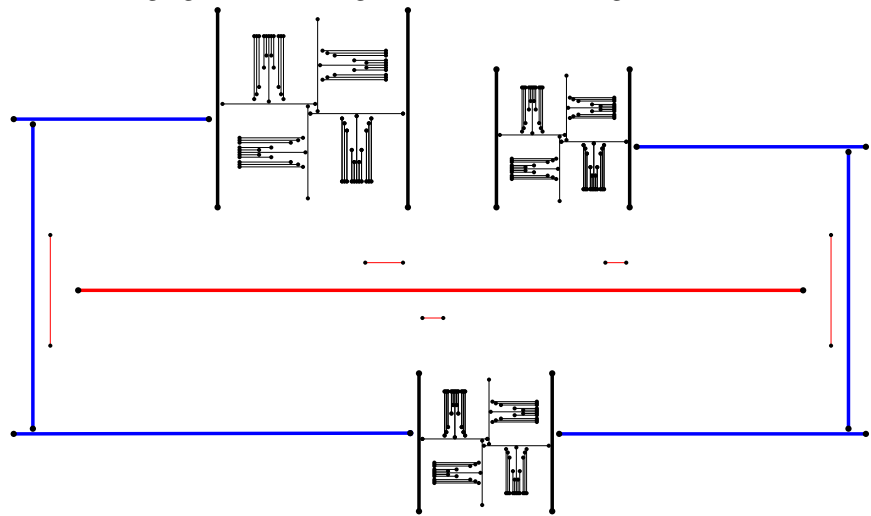}
      \subcaption{Activated vertex gadget.\label{fig:activated}}
      \end{minipage}
      \caption{Two notable configurations of a vertex gadget.}
      \label{fig:activation}
     \end{figure}
    

    Observe that to obtain $\ms$ we need to ``stretch'' the visibility representation~$R$, which corresponds to the introduction of  additional rows and columns in the underlying grid (which is always possible if the inserted row or column only crosses vertical or horizontal segments, respectively);
    refer to \cref{fig:example}c for an example.     
    More precisely, given two edges $(u,v)$ and $(u,w)$ of $G$ such that the edge-segment representing $(u,v)$ is longer than the edge-segment representing $(u,w)$ in $R$, the corresponding edge gadgets have different sizes in $\ms$ and, consequently, the vertex gadgets may need to have different lengths. 

\medskip\noindent\textbf{From $\ms$ to $\mf$.}    The plane perfect matching $\mf$ differs from $\ms$ only in the edge gadgets. More precisely, an edge gadget can assume two configurations, which we call the \emph{start-configuration} and the \emph{final-configuration}, based on the mapping of the four edges of the flip structure. \cref{fig:edge-gadget} and \cref{fig:configurations-b} show the start-configuration and the final-configuration of an edge gadget, respectively. In~$\ms$ all the edge gadgets are in the start-configuration, whereas in $\mf$ all the edge gadgets are in the final-configuration. A start-configuration can be transformed into a final-configuration by a sequence of five flips. Such a sequence of flips transforms an edge gadget associated with an edge $e$ of $G$ by using a connector of a vertex gadget that is associated with one of the endpoints of $e$. Observe that it is possible to use a connector of a vertex gadget only if the vertex gadget is activated. \cref{fig:5flips} shows the sequence of five flips to transform an edge gadget from its start-configuration to its final-configuration. 

 \begin{figure}[tbp]
  \centering
  \includegraphics[page=1,scale=0.9]{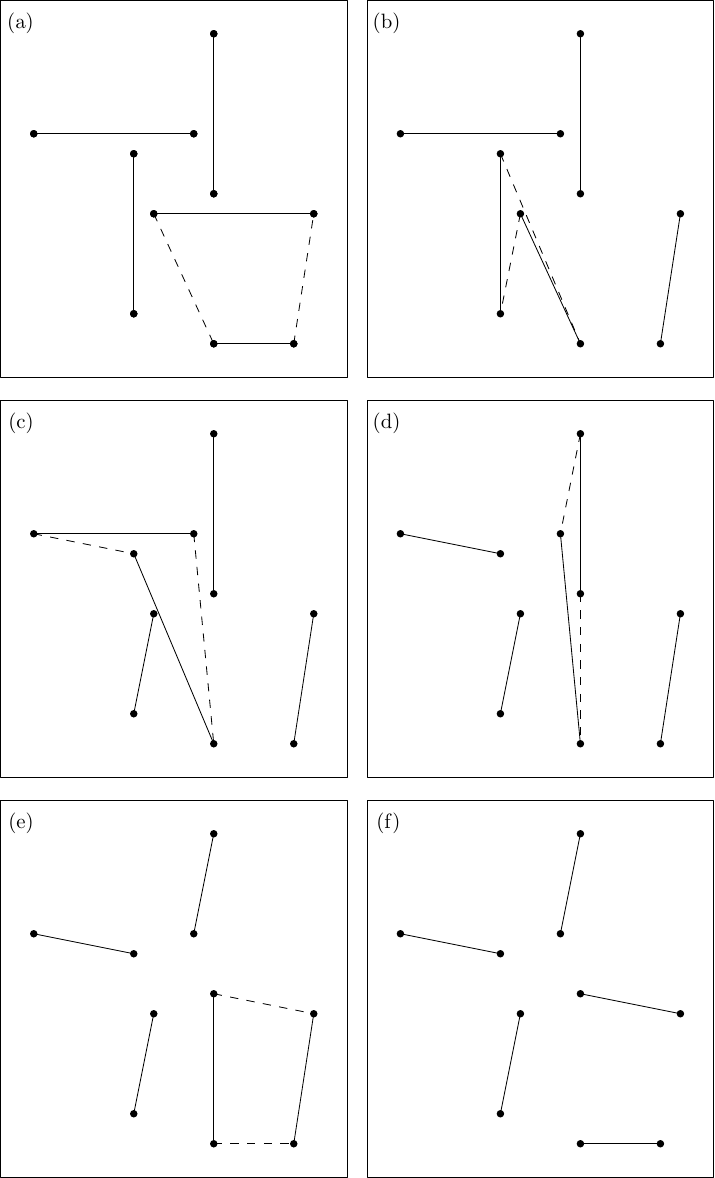}
       \caption{(a)--(e) A sequence of 5 flips to transform an edge gadget from the start-configuration to the final-configuration.  (f) The final-configuration. In all subfigures, dashed edges are the ones that are added with the flip operation, and the bottommost edge in (a) and (f) is a connector of the vertex gadget. For the sake of readability, we only illustrate the flip structure of the edge gadget and the connector of the vertex gadget that is used to perform the transformation.} \label{fig:5flips}
 \end{figure}

    \medskip\noindent\textbf{The point set \ps.} The described point set is of polynomial size, namely $|\ps| \in O(|V|^2+|E|^2)$. Also, it is an integer point set, because it is constructed starting from a weak-visibility representation on an integer grid by introducing additional rows and columns when needed. However, the exact placement of the points depend on the length of the segments representing the edges of the matching, which are discussed in the appendix and yield polynomial~area.

\paragraph*{Correctness.}

We now show the correctness of our reduction in the following two lemmas.


\begin{lemma}\label{le:first-direction}
     If $I=\langle G=(V,E), c\rangle$ is a \texttt{yes} instance of \planarvertexcover then $I'=\langle \ps,\ms,\mf,2c+5|E| \rangle$ is a \texttt{yes} instance of \flipmat.
\end{lemma}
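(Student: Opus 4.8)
The plan is to prove the forward direction constructively: given a vertex cover $V_C$ of $G$ with $|V_C| \le c$, I exhibit an explicit sequence of at most $2c + 5|E|$ flips transforming $\ms$ into $\mf$. The overall budget decomposes naturally: $2c$ flips to activate (and later deactivate) vertex gadgets, and $5|E|$ flips to reconfigure the $|E|$ edge gadgets, each requiring the five-flip sequence of \cref{fig:5flips}. The key structural fact I would rely on is that reconfiguring an edge gadget (associated with an edge $e$) from its start- to its final-configuration requires the use of a connector belonging to an \emph{activated} vertex gadget incident to one endpoint of $e$.

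First I would fix, for each edge $e = uv \in E$, a representative endpoint $\phi(e) \in V_C$ that covers $e$; such a choice exists precisely because $V_C$ is a vertex cover. This partitions the edges by which cover vertex is charged with them. Then I would process the cover vertices one at a time. For each $w \in V_C$, the sequence is: activate the vertex gadget of $w$ (one flip, the activation flip of \cref{fig:activation}), then for every edge $e$ with $\phi(e) = w$ perform the five flips of \cref{fig:5flips} to reconfigure the corresponding edge gadget using a connector of $w$'s now-activated gadget, and finally deactivate the vertex gadget of $w$ (one flip). I would argue that since the activated gadget provides a distinct connector for each incident edge gadget, the five-flip sequences for the different edges charged to $w$ can be carried out independently and in sequence without interfering with one another.

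The flip count then sums to exactly $\sum_{w \in V_C}(2 + 5 \cdot |\{e : \phi(e) = w\}|) = 2|V_C| + 5|E| \le 2c + 5|E| = k$, since the edges partition across the chosen cover vertices. Here I would use the monotonicity of the budget: if $|V_C| < c$ we simply use fewer activation/deactivation flips, so the bound $\le k$ is what we need rather than equality.

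The main obstacle I expect is verifying \emph{planarity and non-interference} throughout the sequence — namely, that each intermediate matching is a genuine plane perfect matching and that operations on one gadget never collide with edges of another gadget or with the frame structure. This is exactly the role the blockers and separators are designed to play (preventing two flip-structure edges from being flipped together, and isolating distinct edge and vertex gadgets), so I would invoke their stated properties to confine each flip's effect to the local region of a single gadget and its charged connector. I would also need to check the geometric feasibility of each individual flip in \cref{fig:5flips} and of the activation flip, i.e., that the newly introduced segments avoid crossings given the stretched visibility layout; the stretching of $R$ guarantees enough room for connectors of varying lengths to reach their edge gadgets. Once locality and feasibility are established, the correctness of the five-flip gadget reconfiguration and the one-flip activation reduces to the per-gadget claims already illustrated in the figures.
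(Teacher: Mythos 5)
Your proposal is correct and matches the paper's own proof in essence: both construct an explicit flip sequence that activates the vertex gadgets of a cover, performs the five-flip reconfiguration of each edge gadget via a connector of an activated incident gadget, and deactivates, giving the budget decomposition $2|V_C| + 5|E| \le 2c + 5|E|$. The only cosmetic differences are that you process cover vertices one at a time with an explicit charging function $\phi$ (which nicely makes precise that each edge gadget is reconfigured exactly once), whereas the paper activates all cover gadgets up front and deactivates them all at the end.
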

\begin{proof}
Let $V_C \subseteq V$ be a vertex cover of $G$ of size $c' \le c$. We consider the $c'$ vertex-segments of $R$ corresponding to the vertices of $V_C$ and we activate each of the corresponding vertex gadgets of $\ms$. For each edge-segment of $R$ incident to these vertex-segments, we transform the corresponding edge gadget of $\ms$ from the start-configuration to the final-configuration, as shown before. Once this has been done for each edge gadget, we deactivate the vertex gadgets that we previously activated, which yields to $\mf$. The  process requires exactly $2c'+5|E|$ flips in total. Indeed, $(i)$ for each edge gadget of $\ms$, exactly 5 flip operations are required to transform the start-configuration into the final-configuration; $(ii)$ to do these transformations, exactly $c'$ vertex gadgets need to be activated and deactivated. 
\end{proof}

\begin{lemma}\label{lem:flip_to_vc}
 If $I'=\langle S,\ms,\mf,2c+5|E| \rangle$ is a \texttt{yes} instance of \flipmat then $I=\langle G=(V,E), c\rangle$ is a \texttt{yes} instance of \planarvertexcover.    
\end{lemma}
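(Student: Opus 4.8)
The plan is to read a vertex cover off of any flip sequence that realizes the \texttt{yes} instance. Suppose there is a sequence of at most $k = 2c + 5|E|$ flips transforming $\ms$ into $\mf$. I will set up a \emph{charging} scheme that assigns every flip of the sequence to a single gadget, and prove two per‑gadget lower bounds: $(i)$ every edge gadget is charged at least $5$ flips, and $(ii)$ every vertex gadget that is \emph{activated} at any point is charged at least $2$ flips. Letting $A \subseteq V$ be the set of vertices whose gadgets are activated at least once, I will argue that $A$ is a vertex cover and that the budget then forces $|A|\le c$.

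First I would establish a \emph{localization} statement, which is the technical heart of the argument: the blocks of $2k+2$ parallel edges forming the separators, the vertex separators, and the middle-edges act as rigid ``walls'' throughout the whole sequence. Since $\ms$ and $\mf$ coincide on all of these edges, any wall that is disturbed must be fully restored by the end; because a wall has $2k+2$ parallel edges while the total budget is only $k$, there are never enough flips to relocate a wall out of the way and put it back. Hence each wall always retains edges in place, and, using planarity, I would show that any single flip involving edges from two different gadgets would create an edge crossing one of these walls and is therefore infeasible. The one controlled exception is the interaction between an edge gadget and a connector of an adjacent vertex gadget, and such a flip I charge to the edge gadget. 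Consequently every flip is charged to exactly one gadget, and the charges across distinct gadgets are disjoint, so the gadgets may be analyzed independently.

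Next I would analyze each gadget locally. Within an edge gadget, the four blockers (eleven parallel edges each) prevent any two edges of the flip structure from being flipped together, and, with the separators fixed by localization, the only way to pass from the start- to the final-configuration is the canonical $5$-flip sequence of \cref{fig:5flips}; in particular at least $5$ flips are charged to each edge gadget, and at least one of them must use a connector, which is available only when an adjacent vertex gadget is \emph{activated}. This yields at least $5|E|$ flips charged to edge gadgets. Symmetrically, since every vertex gadget is deactivated in both $\ms$ and $\mf$, any gadget in $A$ must be activated and later deactivated, each of which is a distinct flip inside that vertex gadget (not touching any edge gadget), so at least $2$ flips are charged to it; repeated activations only increase this count.

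Finally I would combine the bounds. Because each edge gadget's transformation uses a connector of an activated endpoint, every edge of $G$ has an endpoint in $A$, so $A$ is a vertex cover of $G$. As the charging is disjoint, the sequence uses at least $5|E| + 2|A|$ flips, and since its length is at most $2c + 5|E|$ we get $2|A| \le 2c$, i.e.\ $|A| \le c$; hence $I$ is a \texttt{yes} instance of \planarvertexcover. I expect the main obstacle to be making localization rigorous: one must rule out that a clever sequence exploits the walls or cross-gadget flips to beat the per-gadget lower bounds, which requires a careful geometric case analysis of exactly which plane flips remain feasible while the walls stay in place.
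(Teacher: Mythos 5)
There is a genuine gap in your edge-gadget analysis. You claim that, once localization is established, ``the only way to pass from the start- to the final-configuration is the canonical $5$-flip sequence,'' so that every edge gadget must spend a flip on a connector of an activated adjacent vertex gadget; this claim is exactly what makes your set $A$ a vertex cover, and it is unjustified. Your wall argument only rules out flips whose edges would cross a separator or the middle-edges; it does not rule out flips that stay \emph{inside} one edge gadget and involve its blocker or separator edges. Nothing forbids rearranging those edges (at extra cost) until some edge sees an edge of $\mathcal{F}_s$ and then using it in place of a connector, transforming $\mathcal{F}_s$ into $\mathcal{F}_f$ without ever touching a vertex gadget. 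Indeed, the paper never proves such connector-free transformations impossible; it proves only a quantitative statement (\cref{lem:flips_blockers} and \cref{lem:separator}): any transformation of an edge gadget that uses only its own points costs at least $8$ flips, versus $5$ via a connector. If even one edge gadget takes this internal route, the corresponding edge of $G$ has no endpoint in your $A$, and your conclusion that $A$ is a vertex cover collapses; the budget $2c+5|E|$ does not by itself exclude this, since for small $c$ a few internal gadgets at cost $8$ can be cheaper than activating extra vertex gadgets.

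The repair is the accounting argument the paper uses. Split the edge gadgets into the set $\mathcal{E}_i$ transformed internally (cost at least $8$ each, by \cref{lemma:flipping_flip_structure}, \cref{lem:flips_blockers}, \cref{lem:separator}) and the set $\mathcal{E}_o$ transformed with outside help (cost at least $5$ each, plus at least $2$ per used vertex gadget in $\mathcal{V}_o$). The budget gives $8|\mathcal{E}_i|+5|\mathcal{E}_o|+2|\mathcal{V}_o| \le 2c+5|E|$, hence $3|\mathcal{E}_i|+2|\mathcal{V}_o| \le 2c$; then $\mathcal{V}_o$ together with one arbitrary endpoint of each edge in $\mathcal{E}_i$ covers every edge of $G$ and has size $|\mathcal{V}_o|+|\mathcal{E}_i| \le c$. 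So the reduction works not because the internal route is forbidden, but because it is strictly more expensive than the connector route; your proof is missing both this case split and the $8$-flip lower bound that powers it, which is the bulk of the paper's technical work. The remainder of your plan (the $2k+2$-edge walls making cross-gadget flips exceed the budget, the $2$-flip charge per activated vertex gadget counted once regardless of how many incident edge gadgets use it, and the final budget computation) does match the paper's proof.
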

\begin{proof}[Proof sketch]
     We argue that the only way to transform $\ms$ into $\mf$ with a sequence of $2c+5|E|$ flips is through the activation of at most $c$ vertex gadgets such that all edges of $G$ have at least one vertex among the corresponding vertices. In other words, a feasible sequence of flips must identify at most $c$ vertices forming a vertex cover of $G$. 
     Due to the presence of the blocking structures and the separators for the edge gadgets and the vertex separators for the vertex gadget, we cannot transform $\ms$ to $\mf$ with a sequence of less than $2c+5|E|$ flips.
     The proof is rather technical and completely deferred to the appendix.
\end{proof}

\section{Conclusion and Further Research}
We have shown that deciding whether $k$ flips suffice to transform a given plane perfect matching into another on the same point set is NP-complete. 
While the point set exploited in our reduction has integer coordinates and occupies polynomial area, it is not in general position. Therefore, a natural question is whether \cref{th:flipping-matchings-np-complete} holds for point sets in general position. 

Recall that the connectedness of the flip graph of perfect matchings is still open. Toward solving this question, we are currently working on the setting in which our point set consists of at most two convex layers. 
Moreover, we are working on extending our NP-hardness proof for plane odd matchings and plane spanning paths.


\bibliography{biblio}


\clearpage

\appendix

\section{Proof of Lemma~\ref{lem:flip_to_vc}}

As already mentioned, we argue that the only way to transform $\ms$ into $\mf$ with a sequence of $2c+5|E|$ flips is by activating a set of at most $c$  vertex gadgets associated with vertices forming a vertex cover of $G$. In the proof we use some lemmata whose proofs can be found in the next sections, as they require a more detailed description of the edge and vertex gadgets.

Assume that we have a sequence of flips that transforms $\ms$ to $\mf$. 
By Lemma~\ref{lem:2k_crossings},  we need at least $k+1$ flips (hence exceeding our budget) if we have an edge between a point of the flip structure or the blockers of an edge gadget and a point that is not in the edge gadget or an adjacent vertex gadget. Consequently, all flips used to transform a start-configuration  into a final-configuration of an edge gadget only use edges connecting two points of the edge gadget itself or an adjacent vertex gadget.
Note that this means that the flip structures of two different edge gadgets are flipped independently (i.e., the corresponding flip sequences do not share operations).

By Lemma~\ref{lem:flips_blockers} and Lemma~\ref{lem:separator}, we need at least eight flips for each edge gadget to transform its start-configuration $\mathcal{F}_s$ into its final-configuration $\mathcal{F}_f$ only using points of the edge gadget itself. Note that, since $c \leq |E|$, it holds $8|E|>2c+5|E|=k$.

Let $\mathcal{V}$ be a vertex gadget.
By construction, no edge of $\mathcal{V}$ sees an edge of the blockers $\mathcal{B}$, the start-configuration $\mathcal{F}_s$, and the final-configuration $\mathcal{F}_f$ of any edge gadget.
Hence, we need at least two flips in every used vertex gadget, one flip to obtain an edge that sees an edge of $\mathcal{F}_s$, and one flip to get rid of the edge that sees an edge of $\mathcal{F}_f$. Note that these flips are independent for each used vertex gadget, but different adjacent edge gadgets might use the same flip. Hence, we have at least two flips in each used vertex gadget, independent of the number of edge gadgets incident to it.
For each edge gadget where points of $\mathcal{V}$ are involved in flipping the flip structure, we need at least five flips if only one edge is involved, or at least six flips if at least two other edges are involved.

Let $\mathcal{E}_i$ be the set of edge gadgets where $\mathcal{F}_s$ is transformed into $\mathcal{F}_f$ only using edges of the corresponding edge gadget.
Let $\mathcal{E}_o$ be the set of edge gadgets where $\mathcal{F}_s$ is transformed into $\mathcal{F}_f$ by using edges outside of the corresponding edge gadget, and let $\mathcal{V}_o$ be the set of vertex gadgets whose edges are used by edge gadgets in $\mathcal{E}_o$.
Note that $|\mathcal{E}_i|+|\mathcal{E}_o|=|\mathcal{E}|$ $(= |E|)$.
By the observations above, we need at least $8|\mathcal{E}_i|+5|\mathcal{E}_o|+2|\mathcal{V}_o|$ flips. Note that $|\mathcal{E}_i|$ is at least the size of a vertex cover of the corresponding induced subgraph in $G$. Hence $3|\mathcal{E}_i| + 2|\mathcal{V}_o|>c$ if $\mathcal{E}_i$ is not empty.
Therefore, $\mathcal{E}_i$ is empty and $\mathcal{E}_o$ is the set of all edge gadgets.
Then every edge gadget is adjacent to some vertex gadget in $\mathcal{V}_o$.
So $\mathcal{V}_o$ corresponds to a vertex cover of size at most $c$ in $G$.

\section{Technical details}

In this section we describe the constructions of the gadgets in more detail and prove that the gadgets have the desired properties exploited in the previous overview.
Before we describe the gadgets, we state two lemmata that are used to show properties of the gadgets.

\begin{lemma}\label{lem:crossings}
 Let \ps be a point set and let \m and $\mathcal{M}'$ be plane perfect matchings on \ps.
 Let $e$ be an edge of $\mathcal{M}'$ and let $x$ be the number of edges of $M$ crossing $e$.
 Then we need at least $\frac{x}{2}$ flips to transform \m into $\mathcal{M}'$.
\end{lemma}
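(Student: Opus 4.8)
The plan is to track, over the course of any valid flip sequence, how many edges of the current matching cross the fixed segment~$e$. Since $e$ is an edge of the target matching $\mathcal{M}'$ and $\mathcal{M}'$ is plane, no edge of $\mathcal{M}'$ crosses~$e$; hence at the end of the transformation the number of edges crossing $e$ is exactly~$0$. At the start this number equals~$x$ by hypothesis. The crux is then to bound how much a single flip can decrease this quantity.

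First I would fix the segment $e$ and, for each matching arising in the sequence, define $c_i$ to be the number of its edges that properly cross~$e$ (i.e.\ meet the relative interior of $e$ without sharing an endpoint with it). Thus $c_0 = x$, and writing $f$ for the total number of flips we have $c_f = 0$ by the planarity of $\mathcal{M}'$ observed above. Note that every intermediate matching is again a plane perfect matching on \ps, though this fact is not needed for the counting.

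Next I would analyze the effect of a single flip. By definition a flip removes exactly two edges and inserts exactly two edges. Letting $r\in\{0,1,2\}$ be the number of removed edges crossing $e$ and $a\in\{0,1,2\}$ the number of inserted edges crossing $e$, we get $c_{i+1} = c_i - r + a$, so $c_i - c_{i+1} = r - a \le r \le 2$. This yields the key inequality $c_i - c_{i+1} \le 2$ for every~$i$: each flip decreases the number of edges crossing $e$ by at most two, since only the two removed edges can eliminate crossings while the two inserted edges can only create them.

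Summing this telescoping bound over all $f$ flips gives $x = c_0 - c_f = \sum_{i=0}^{f-1}(c_i - c_{i+1}) \le 2f$, hence $f \ge x/2$, as claimed. The argument is elementary, and I expect no substantial obstacle. The only two points requiring care are the justification that the final crossing count is $0$ (which rests on $e \in \mathcal{M}'$ together with the planarity of $\mathcal{M}'$) and the precise accounting that a flip, replacing two edges by two edges, cannot reduce the crossing count by more than the two removed edges could have contributed.
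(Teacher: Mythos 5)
Your proof is correct and follows essentially the same approach as the paper: tracking the number of edges of the current matching that cross $e$, observing that this count starts at $x$, must end at $0$ since $e\in\mathcal{M}'$ and $\mathcal{M}'$ is plane, and can decrease by at most $2$ per flip because each flip removes only two edges. Your write-up is in fact slightly more careful than the paper's (making explicit that inserted edges can only increase the count, and summing the telescoping bound), but the underlying argument is identical.
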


\begin{proof}
There are $x$ edges in \m that cross $e$.
Hence, if an intermediate matching crosses $e$, then it is not $\mathcal{M}'$.
In every step, we flip two edges.
So we can flip at most two edges that cross $e$ in every step.
This means that we reduce the number of edges that cross $e$ by at most two in every step.
Therefore we need at least $\frac{x}{2}$ flips to transform \m into~$\mathcal{M}'$.
\end{proof}

\begin{lemma}\label{lem:flip_4gon}
Let \m be a plane perfect matching, and let $a$ and $c$ be two edges of \m.
If we flip $a$ and $c$ and obtain an edge $b$, then the other edge $d$ is uniquely defined.
Furthermore, $abcd$ is a plane $4$-gon.
\end{lemma}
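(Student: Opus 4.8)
The plan is to reduce everything to the four endpoints of $a$ and $c$. Write $a = p_1p_2$ and $c = q_1q_2$, and let $\mathcal{Q} = \{p_1,p_2,q_1,q_2\}$ be the point set involved in the flip. By the definition of a flip, the two new edges $b$ and $d$ have both endpoints in $\mathcal{Q}$, together they cover $\mathcal{Q}$ (so that $\{b,d\}$ is a perfect matching of $\mathcal{Q}$), and $\{b,d\}\neq\{a,c\}$. For the uniqueness of $d$, I would simply observe that $b$ is already fixed and uses two of the four points of $\mathcal{Q}$; since $\{b,d\}$ is a perfect matching of $\mathcal{Q}$, the edge $d$ must join the two remaining points, and there is exactly one way to do this. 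Hence $d$ is uniquely determined by $b$.

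To see that $abcd$ is a $4$-cycle, I would count degrees in the graph $H$ formed by the four edges $a,b,c,d$ on the vertex set $\mathcal{Q}$. Every point of $\mathcal{Q}$ is an endpoint of exactly one edge of the matching $\{a,c\}$ and of exactly one edge of the matching $\{b,d\}$, so every vertex of $H$ has degree exactly two; thus $H$ is a disjoint union of cycles. Since the four edges are pairwise distinct (in particular $H$ has no pair of parallel edges, as $b$ and $d$ differ from both $a$ and $c$), the only possibility on four vertices is a single $4$-cycle. In this cycle the two edges of each matching are non-adjacent, so $a$ is opposite $c$ and $b$ is opposite $d$; reading off the cycle yields the cyclic order of sides $a,b,c,d$, that is, the quadrilateral $abcd$.

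It remains to prove that this quadrilateral is plane, i.e.\ simple. For a straight-line quadrilateral whose consecutive sides meet only at their shared vertex, simplicity is equivalent to the two pairs of opposite sides being pairwise crossing-free, and here the opposite pairs are exactly $\{a,c\}$ and $\{b,d\}$. Both edges of $\{a,c\}$ lie in the plane matching \m, so they do not cross; both edges of $\{b,d\}$ lie in the matching produced by the flip, which is again plane by the definition of a flip, so they do not cross either. Hence no two opposite sides cross and $abcd$ is a plane $4$-gon.

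The only point requiring real care, and the one I would treat as the main obstacle, is the assumption used in the last step that consecutive sides meet only at their common vertex. If instead two consecutive sides, say $a=p_1p_2$ and the adjacent edge $p_1q_1$, overlapped along a segment, then $p_1,p_2,q_1$ would be collinear, and the shorter of the two segments would contain the far endpoint of the longer one in its relative interior; that far endpoint is a vertex of an edge lying in the same matching as the longer side, which already contradicts the planarity of either \m or the flipped matching. I would therefore rule out this degeneracy at the outset, after which the characterization of simplicity via opposite sides applies verbatim and the argument goes through unchanged.
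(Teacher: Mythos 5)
Your proof is correct and takes essentially the same route as the paper's: $d$ is forced because it must join the two points of $\{p_1,p_2,q_1,q_2\}$ left unmatched by $b$, and the quadrilateral is simple because the opposite pairs $\{a,c\}$ and $\{b,d\}$ each lie in a plane matching while consecutive sides only share endpoints. Your explicit handling of the degenerate case where consecutive sides overlap along a line (which the paper's proof glosses over with ``they are adjacent'') is a careful refinement of the same argument, not a different approach.
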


\begin{proof}
    Let $a=v_1v_2$ and $c=v_3v_4$.
    Without loss of generality, let $b=v_2v_3$.
    Let $\mathcal{M'}$ be a perfect matching that can be obtained by flipping $a$ and $c$ in \m and that contains the edge $b$.
    Since $\mathcal{M'}$ is again a perfect matching, $d$ is $v_1v_4$ since $v_1$ and $v_4$ are the only points who are not matched after deleting $a$ and $c$ from \m and adding $b$.

    Regarding the plane $4$-gon, $a$ and $c$ do not cross $b$ or $d$ since they are adjacent.
    Since \m and $\mathcal{M'}$ are plane, $a$ does not cross $c$ and $b$ does not cross $d$.
\end{proof}

In what follows, we say that a point $p$ \emph{sees} a point $q$ if the segment $\overline{pq}$ does not cross any edge of the matching.
Further, we say that an edge $v_1v_2$ \emph{sees} an edge $v_3v_4$ if $v_1$ sees $v_3$ and $v_2$ sees $v_4$, or $v_1$ sees $v_4$ and $v_2$ sees $v_3$.
We say that an edge is \emph{involved} in a flip sequence if it is flipped at least once.

We next analyze the edge gadgets in more detail. As stated before, each edge gadget consists of the flip structure, four blockers, and two separators.
\begin{itemize}
    \item $F$ denotes the set of vertices spanning the flip structure.
    \item  $\mathcal{F}_s$ and $\mathcal{F}_f$ denote the start-configuration and the final-configuration, respectively.
    \item The four blockers are denoted by $\mathcal{B}_0, \mathcal{B}_1, \mathcal{B}_2$, and $\mathcal{B}_3$, and we call their union $\mathcal{B}$ the \emph{blocking structure}
    \item The left and right separators are denoted by $\mathcal{S}_l$ and $\mathcal{S}_r$, respectively
\end{itemize}

\subsection{The flip structure}
The flip structure $F$ consists of eight points, four inner and four outer ones.
Let $f_0, f_1, f_2, f_3$ be the outer points in clockwise direction and $f'_0, f'_1, f'_2, f'_3$ be the inner points, such that $f_if'_i$ is an edge in $\mathcal{F}_s$, for $0\leq i \leq 3$ and $f_0$ is the topmost of $F$. 
In $\mathcal{F}_s$, $f_i$ sees $f'_{i-1}$, but does not see $f'_{i+1}$ and $f'_{i+2}$ for $0\leq i \leq 3$.
Further, $\mathcal{F}_f$ consists of the edges $f_if'_{i-1}$ for $0\leq i\leq 3$. The indices for the points are modulo $4$. 

In the following lemma we prove lower bounds for the number of flips to get from $\mathcal{F}_s$ to~$\mathcal{F}_f$.

\begin{lemma}\label{lemma:flipping_flip_structure}
Let $\mathcal{M}=\mathcal{F}_s \cup \mathcal{R}$ be a plane perfect matching that consists of $\mathcal{F}_s$ and some other edges $\mathcal{R}$.
\begin{itemize}
\item It takes at least four steps to flip $\mathcal{F}_s$ to $\mathcal{F}_f$ if  no edge of $\mathcal{R}$ is involved in any of the flips.
 \item It takes at least five steps to flip $\mathcal{F}_s$ to $\mathcal{F}_f$ if exactly one edge of $\mathcal{R}$ is involved in a flip.
 \item It takes at least six steps to flip $\mathcal{F}_s$ to $\mathcal{F}_f$ if at least two edges of $\mathcal{R}$ are involved in the flips, but no flip involves only edges of $\mathcal{R}$.
\end{itemize}
\end{lemma}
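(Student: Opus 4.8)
The plan is to reduce everything to an analysis of flip sequences acting on the eight points of $F$, using Lemma~\ref{lem:flip_4gon} (and, where convenient, Lemma~\ref{lem:crossings}) to control each individual flip and the stated visibility data, namely that in $\mathcal{F}_s$ the point $f_i$ sees exactly $f'_{i-1}$ and $f'_i$, to decide which candidate flips are planar. I would first prove the first bullet (the purely internal case) and then bolt on the bookkeeping for the external edges $\mathcal{R}$ to get the second and third bullets. Throughout I call an edge \emph{radial} if it joins an outer point to an inner point, \emph{outer} if it joins two outer points, and \emph{inner} if it joins two inner points; note that $\mathcal{F}_s$ and $\mathcal{F}_f$ each consist of four radial edges, but with different pairings.

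For the first bullet I would argue that the first and the last flip are essentially forced. By Lemma~\ref{lem:flip_4gon}, flipping two edges $a,c$ into an edge $b$ determines the fourth edge $d$ and forces $abcd$ to be a plane $4$-gon. Starting from $\mathcal{F}_s$, the only edges available are the four radials $f_if'_i$; flipping two of them can only yield the ``cross pairing'' $\{f_if'_j,f_jf'_i\}$ or the ``outer/inner pairing'' $\{f_if_j,f'_if'_j\}$. Using the visibility data I would rule out every cross pairing (for adjacent indices one of the two new edges is the forbidden diagonal that crosses the other; for opposite indices one endpoint simply does not see the other) and every outer/inner pairing on opposite indices (the long outer and long inner diagonals cross each other). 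Hence the first flip must turn two \emph{adjacent} radials $f_if'_i,f_{i+1}f'_{i+1}$ into $f_if_{i+1}$ and $f'_if'_{i+1}$, leaving the two non-adjacent start radials untouched; call the resulting matching $X_i$. The reflection that exchanges $\mathcal{F}_s$ and $\mathcal{F}_f$ gives the symmetric statement for the last flip: the matching $Y_m$ immediately preceding it consists of two final radials together with one outer and one inner edge.

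The crux is then a direct comparison: for every choice of $i$ and $m$ the matchings $X_i$ and $Y_m$ differ in at least three edges (they can share at most one of their outer/inner edges, and their radial edges are of opposite nature, start-type versus final-type). Since a single flip changes exactly two edges, no one flip can transform an $X_i$ into a $Y_m$. Consequently a three-flip sequence (first flip, one middle flip, last flip) is impossible, and the two- and one-flip cases are excluded even more easily because $X_i$ still contains two start radials while $\mathcal{F}_f$ contains none; this yields the lower bound of four.

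For the second and third bullets I would reuse this skeleton but track the external edges of $\mathcal{R}$ as extra tokens. An external edge can act as the partner $d$ that makes a previously forbidden final radial realizable, thereby relaxing the $X_i$-to-$Y_m$ obstruction at one boundary; however, since the target restores $\mathcal{R}$, any involved external edge must be both broken and recreated, which I would charge as at least one extra flip per external edge on top of the internal budget. The hypothesis that no flip uses only edges of $\mathcal{R}$ is exactly what prevents the external edges from being rearranged among themselves for free, so that every external-involving flip also makes only bounded progress on the flip structure. Carrying out this charging for one involved external edge gives at least five flips, and for two or more it gives at least six. I expect the main obstacle to be precisely this last step: making the charging watertight requires a case analysis over how the external-involving flips interleave with the internal ones and over which boundary (or middle) transition each external edge is used to unblock, together with a check that no interleaving secretly amortizes the break/recreate cost of an external edge against useful internal progress. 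The purely internal argument of the first bullet, by contrast, is clean because the first and last flips are forced and the $X_i$/$Y_m$ comparison is a finite check up to the cyclic and reflective symmetry of $F$.
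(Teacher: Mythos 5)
Your first bullet follows a genuinely different route from the paper: you force the combinatorial type of the first and last flips and then show that any post-first-flip matching $X_i$ and any pre-last-flip matching $Y_m$ differ in at least three edges, so no single flip can bridge them. The paper instead uses a potential argument: the \emph{weight} of a matching is the number of $\mathcal{F}_f$-edges minus the number of $\mathcal{F}_s$-edges, it must rise from $-4$ to $+4$, and a case analysis shows every flip gains at most $+2$. Your argument for this part is plausible; the geometric facts you invoke (infeasibility of the cross pairings and of the long diagonals) are of the same ``by construction'' nature as the facts the paper itself relies on.

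Bullets two and three, however, contain a genuine gap --- precisely the step you flag as the obstacle. First, your charging premise misreads the statement: as the lemma is applied later (in Lemma~\ref{lem:flips_blockers} and Lemma~\ref{lem:separator}), it only bounds the number of flips needed to reach a matching that \emph{contains} $\mathcal{F}_f$; restoring $\mathcal{R}$ is not required and is charged separately by those later lemmas. So ``any involved external edge must be both broken and recreated'' is not available, and the extra flip per external edge cannot be extracted this way. Second, even granting restoration, the charging is never carried out: once a flip involving an edge of $\mathcal{R}$ may occur anywhere in the sequence, your forced first/last-flip analysis no longer applies, and you have no mechanism bounding how much progress an external-involving flip can make. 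The paper's weight function supplies exactly this mechanism: a flip involving an edge whose endpoints both lie outside $F$ creates no edge of $\mathcal{F}_f$ (each new edge keeps an endpoint outside $F$), hence gains at most $+1$, which immediately yields the bound of five. For the bound of six, the paper identifies a \emph{third} $+1$ flip: the last flip involving an edge with exactly one endpoint in $F$ must flip two such mixed edges, hence removes no start edge and creates at most one final edge; since $3\cdot 1 + 2\cdot 2 = 7 < 8$, six flips are forced. Without a substitute for this counting, your proof of the last two bullets is incomplete.
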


\begin{proof}

 Let $S'$ be a point set that contains $F$ such that $\mathcal{F}_s$ and $\mathcal{F}_f$ can be extended to plane perfect matchings.
 Let \m be a perfect matching on $S'$.
 We define the \emph{weight} of \m as the number of edges of $\mathcal{F}_f$ in \m minus the number of  edges of $\mathcal{F}_s$ in \m. Since $\mathcal{F}_s$ and $\mathcal{F}_f$ each consists of exactly four edges, the weight of \m is between $-4$ and $4$.
 
$\mathcal{F}_s$ has weight $-4$ and $\mathcal{F}_f$ has weight $4$.
So we need to increase the weight of the configuration by $8$. In order to analyze the increment of weight of a single flip operation, we next discuss three cases depending on which edges we flip and which we obtain after each flip. 

 \begin{itemize}
  \item \textsc{Case 1:} We flip two edges $a$ and $c$ of $\mathcal{F}_s$.

  Then we do not obtain any edge of $\mathcal{F}_f$.
  Assume to the contrary that we obtain an edge $b$ of $\mathcal{F}_f$.
  By \Cref{lem:flip_4gon}, $a$, $b$, and $c$ define a plane 4-gon.
  Any of such 4-gons
  contains only one point by construction. This point is matched with another point, that is outside of the $4$-gon and this edge crosses the $4$-gon.
  Hence, there is a matching in the flip sequence that is not plane. Therefore, we never obtain an edge of $\mathcal{F}_f$ if we flip two edges of $\mathcal{F}_s$.

   \item \textsc{Case 2:} We flip an edge of $\mathcal{F}_s$ and any other edge.

  By construction, any $4$-gon
  defined
  by an edge of $\mathcal{F}_s$, two edges of $\mathcal{F}_f$ and another edge is self crossing.
  Hence, we obtain at most one edge of $\mathcal{F}_f$.

   \item \textsc{Case 3:} We flip two edges not in $\mathcal{F}_s$.

  Since these two edges contain four points, we obtain at most two edges of $\mathcal{F}_f$.
 \end{itemize}

Based on the above cases, each flip operation increases the weight by at most $2$.
 Hence, we need at least $4$ flips to get from $\mathcal{F}_s$ to $\mathcal{F}_f$, which proves the first item of the statement.

Now consider the case that at least one edge that does not connect points of the flip structure is involved in the shortest flip sequence.
Let $e$ be an edge connecting two points that are not in the flip structure.
If we flip $e$ with any edge in the flip structure, we do not get an edge of $\mathcal{F}_f$.
Hence, we only increase the weight of the configuration by at most one.
This means, we need at least $5$ flips to get from $\mathcal{F}_s$ to $\mathcal{F}_f$ if  another edge is involved in the shortest flip sequence, which proves the second item of the statement.

Now let $e'$ be another edge connecting two points not in $F$ which is involved in the shortest flip sequence.
Then, as with $e$, any flip involving $e'$ increases the weight of the current configuration by at most one. Also, note that then at least one endpoint of $e$ or $e'$ is matched with a point of the flip structure. Since every point of $F$ is matched with another point of $F$ in $\mathcal{F}_f$, every edge with one point in $F$ and one point in $e$ or $e'$ has to be flipped.
Consider the last flip that involves such an edge.
In this flip we have two edges each having exactly one point in $F$. Otherwise, we still have an edge with exactly one point in $F$.
Therefore, we do not remove any starting edge and get at most one final edge.
So we increase the weight of the configuration by at most one again.
This means, we have at least three flips which increase the weight of the configuration by at most one.
Since the remaining flips increase the weight of the configuration by at most two, we need at least six flips in this case ($3\cdot 1 + 2\cdot 2<8$), which shows the third item of the statement and concludes the proof.
\end{proof}

\subsection{The blockers}
The main goal for the blockers is to prevent that in an edge gadget, $\mathcal{F}_s$ can be flipped to $\mathcal{F}_f$ in less than eight flips.

To achieve this, we need edges that cross $f_if_{i+1}$, otherwise we can transform $\mathcal{F}_s$ into $\mathcal{F}_f$ in four steps.
Also, we want to ensure that we need multiple flips to obtain an edge that sees $f_if'_i$. Furthermore, we place the points of a blocker in such a way that each of these points sees at most one of the inner points of $\mathcal{F}_s$.
We now describe the construction for $\mathcal{B}_1$; the other parts of the blocking structure are rotated copies of $\mathcal{B}_1$. Refer to \cref{fig:blocker_structure} for an illustration.

\begin{figure}
 \centering
 \includegraphics[page=1]{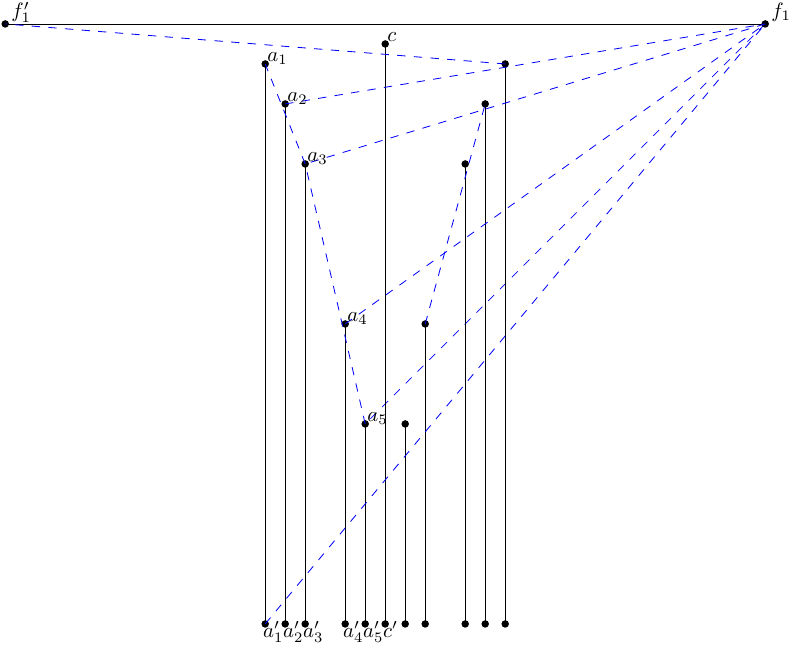}
 \caption{The drawing of a blocker. The line segment $f_1'f_1$ is part of $\mathcal{F}_s$. The blue dashed lines indicate the structure of the point set. The bottommost points are depicted on a line for presentation purposes.}
 \label{fig:blocker_structure}
\end{figure}

Note that the construction is symmetric.
We label the points on the right side of $cc'$ such that $b_j$ has the same height as $a_j$, and we have edges $b_jb'_j$.
Although the points $a'_j, c'$ and $b'_j$ are depicted as collinear, they form a convex point set.
Further, the points $a_j$ and $a'_j$ form a convex point set, and hence, also the points $b_j$ and $b'_j$ form one.
Each line $a'_jf_1$ crosses every edge $b_lb'_l$, $1\leq l \leq 5$, and $cc'$.
Each line $a_jf_1$ crosses every edge $b_lb'_l$ with $l<j$ and $cc'$.
Furthermore, $f_if_{i+1}$ crosses $a_ja'_j, cc', b_jb'_j$ for $3\leq j \leq 5$.
Hence, $f_if_{i+1}$ crosses seven edges.

In the following lemma, we show that we do not obtain an edge connecting of $\mathcal{B}$ that sees $f_if'_i$ in one flip.

\begin{lemma}\label{lem:flips_blockers}
Given an edge gadget $\mathcal{E}$ and its blocking structure $\mathcal{B}$.
Let $\mathcal{E}_f$ be $\mathcal{E}$ with $\mathcal{F}_f$ instead of $\mathcal{F}_s$.
Then any sequence of flips that transforms $\mathcal{E}$ into $\mathcal{E}_f$, that only flips edges connecting points of $\mathcal{B}$ and $\mathcal{F}_s$, has length at least $8$.
\end{lemma}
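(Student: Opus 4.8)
The plan is to establish the lower bound of $8$ flips by showing that within any such restricted flip sequence, the flip structure alone already requires a certain number of flips, and that the blockers force additional ``overhead'' flips that cannot be avoided. I would combine two ingredients: the weight argument from \cref{lemma:flipping_flip_structure} (which already gives a lower bound of four flips when no outside edges are used), and a crossing-based argument exploiting \cref{lem:crossings} together with the seven crossings that $f_if_{i+1}$ has with edges of $\mathcal{B}$.

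First I would observe that since the flip sequence only uses edges connecting points of $\mathcal{B}$ and $\mathcal{F}_s$, every flip is ``internal'' to the edge gadget, so the four-flip bound of \cref{lemma:flipping_flip_structure} applies to the flip-structure edges in isolation. The key point is that reaching $\mathcal{F}_f$ requires obtaining each edge $f_if'_{i-1}$, and in the start-configuration $f_i$ does not see $f'_{i+1}$; the blocker edges crossing segments like $f_if_{i+1}$ and $f_if'_i$ are exactly what obstruct a direct four-flip transformation. So the next step is to argue that, because each blocker point sees at most one inner point of $\mathcal{F}_s$ and because $a'_jf_1$ crosses all the $b_lb'_l$ edges, no single flip can produce an edge of $\mathcal{B}$ that sees $f_if'_i$ — this is precisely the property the lemma isolates, and I would prove it by a case analysis on which two edges are flipped, using \cref{lem:flip_4gon} to fix the fourth edge and then checking that the resulting $4$-gon either is self-crossing or fails the visibility condition by construction.

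Having established that obtaining a visibility to $f_if'_i$ through the blockers costs at least some flips to ``open up'' the blocker and at least one more to ``close it back,'' I would count: the four flips forced on the flip structure, plus the extra flips needed to clear and restore the blocker edges crossing the relevant segments. Concretely, I expect the argument to show that at least two additional flips per relevant blocker side are needed beyond the four baseline flips (one to create the auxiliary visibility edge and one to remove it, analogous to the activation/deactivation argument for vertex gadgets in the proof of \cref{lem:flip_to_vc}), together pushing the total from four up to eight. Since every intermediate matching must be plane, any edge temporarily introduced to bypass a blocker must later be flipped away, and this pairing of ``introduce'' and ``remove'' flips is what I would formalize using the weight function extended to track blocker edges.

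The main obstacle I anticipate is the bookkeeping of which edges are available at each intermediate step: because the blockers consist of eleven parallel edges with carefully arranged visibility (the collinear-looking but convex configurations of $a'_j, c', b'_j$ matter here), I must rule out clever flip sequences that reuse a single blocker-clearing flip to service multiple flip-structure transitions, and I must ensure that the crossings counted by \cref{lem:crossings} are not double-counted against the weight increments from \cref{lemma:flipping_flip_structure}. Making these two lower bounds additive — rather than merely each valid in isolation — is the delicate part, and I would handle it by partitioning the flips into those that touch only flip-structure points and those that touch blocker points, bounding each class separately and verifying that the classes are disjoint because a flip producing an $\mathcal{F}_f$-edge cannot simultaneously remove a blocker crossing.
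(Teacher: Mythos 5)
Your ingredients are the right ones (the weight function of \cref{lemma:flipping_flip_structure}, the crossing bound of \cref{lem:crossings}, the seven blocker edges crossing $f_if_{i+1}$, and the fact that a single flip cannot create a blocker edge that sees $\mathcal{F}_s$), but your accounting scheme has a genuine flaw. You propose to keep a ``baseline'' of four flips on the flip structure and to add create/remove overhead on top of it, making the two bounds additive by partitioning the flips into those touching only points of $F$ and those touching blocker points, justified by the claim that a flip producing an $\mathcal{F}_f$-edge cannot simultaneously touch the blocker side of the partition. That disjointness fails, and with it the additive count collapses: a flip of a \emph{hybrid} edge (one endpoint in $F$, one in $\mathcal{B}$) with an edge of $\mathcal{F}_s$ can simultaneously delete an $\mathcal{F}_s$-edge and create an $\mathcal{F}_f$-edge --- a weight gain of $2$ --- while touching a blocker point. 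Indeed, in the intended five-flip sequence (\cref{fig:5flips}) \emph{every} flip involves the outside point, so the class of ``flips touching only flip-structure points'' can be empty; there is no baseline of four such flips to which overhead could be added. The four-flip bound (item one of \cref{lemma:flipping_flip_structure}) is only valid when \emph{no} outside edge is involved at all. Moreover, even granting your partition, your arithmetic does not reach $8$: with a single relevant blocker side your scheme yields $4+2=6$.

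The paper avoids exactly this trap by a case distinction on how many edges outside $F$ are involved, rather than by a partition of the flips. If no outside edge is involved, two edges of $\mathcal{F}_s$ must come to see each other; since $f_if_{i+1}$ crosses seven blocker edges, \cref{lem:crossings} forces at least four flips on pure blocker edges in addition to the four structure flips, giving $4+4=8$. If outside edges are involved, the overhead of using them is not added on top of the four-flip bound but is already internalized in items two and three of \cref{lemma:flipping_flip_structure} ($5$ flips if exactly one outside edge is involved, $6$ if at least two are), whose proofs track which flips can gain weight $2$ and which only $1$. To these the paper adds only genuinely disjoint flips: at least $2$ set-up flips that occur \emph{before} any blocker-derived edge sees $\mathcal{F}_s$, and hence involve only blocker--blocker edges and change the weight not at all (this is where the special role of $cc'$ and \cref{lem:crossings} enter), plus, in the one-outside-edge case, $1$ final flip to restore $\mathcal{B}$, since $\mathcal{E}_f$ requires the blockers intact. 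This gives $2+6=8$ or $2+5+1=8$. To repair your proposal, replace the partition argument by this case distinction and use the $5$/$6$-flip bounds of \cref{lemma:flipping_flip_structure} instead of the $4$-flip bound.
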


\begin{proof}
We first show that we need at least $8$ flips to transform $\mathcal{F}_s$ to $\mathcal{F}_f$ if $f_if'_i$ sees $f_{i+1}f'_{i+1}$.
By construction, $f_if_{i+1}$ crosses seven edges.
Hence, we need at least $4$ flips that $f_if'_i$ sees $f_{i+1}f'_{i+1}$ by Lemma~\ref{lem:crossings}.
Further, we need at least $4$ flips to transform $\mathcal{F}_s$ into $\mathcal{F}_f$ by Lemma~\ref{lemma:flipping_flip_structure}. In total we need at least $8$ flips in this case.

We now show that we need at least two flips to obtain an edge that sees an edge of $\mathcal{F}_s$.
 By construction, the edge of $\mathcal{F}_s$ depicted in Figure~\ref{fig:blocker_structure} is the only edge of $\mathcal{F}_s$ that can be seen by any point of the blockers.
 Further, the edge $cc'$ prevents that any edge connecting two points on its left sees $f_if'_i$.
 Hence, if after some flips an edge connecting two points of $\mathcal{B}$ sees $f_if'_i$, then either this edge is $cc'$ or $cc'$ has been flipped.

 \begin{itemize}
    \item \textsc{Case 1:} $cc'$ sees $f_if'_i$.

  For this to happen, we need at least three flips by Lemma~\ref{lem:crossings}.
   \item \textsc{Case 2:} $cc'$ has been flipped before.

  If $cc'$ is flipped with an edge of $\mathcal{B}$, then we can only flip it with the edge $a_5a'_5$ or $b_5b'_5$. Hence, we do not get an edge that sees $f_if'_i$, so we need at least two flips.
  If $cc'$ is involved in the second flip or later, then we have at least two flips already.
 \end{itemize}

Thus, in both cases we need at least two flips to obtain an edge that sees $f_if'_i$. Observe that this intermediate matching differs from $\mathcal{B}$ in at least two edges since two perfect matchings always differ in at least two edges. By Lemma~\ref{lemma:flipping_flip_structure} we need at least $6$ more flips to obtain $\mathcal{F}_f$ if two edges not connecting points of $F$ are involved in transforming $\mathcal{F}_s$ to $\mathcal{F}_f$.
Otherwise, if one edge not connecting points of $F$ is involved in transforming $\mathcal{F}_s$ to $\mathcal{F}_f$, we need at least $5$ more flips to obtain $\mathcal{F}_f$ . Since we still have an edge that is not in $\mathcal{B}$, we need at least one more flip to obtain $\mathcal{E}_f$.
In any case we need at least $8$ flips to transform $\mathcal{E}$ to $\mathcal{E}_f$.
\end{proof}

\subsection{The separators}

An edge of $\mathcal{F}_s$ might be seen by an edge of another edge gadget even without doing any flips.
To avoid this, we introduce separators. Each separator consists of $2k+2$ vertical edges whose topmost points are above $f_0$ and the bottommost points are below $f_2$. (Recall that $k=2c+5|E|$.)
The two innermost edges are two units (one per side) longer than the other edges of a separator.
We denote the separator with $\mathcal{S}$, consisting of the left part $\mathcal{S}_l$ and the right part $\mathcal{S}_{r}$.
Furthermore, the separators are placed in such a way that for each edge $v_1v_2$ of $\mathcal{S}$ each of the lines $v_1f'_i$ and $v_2f'_i$ ($0\leq i \leq 3$) crosses at least seven edges of $\mathcal{B}$.
Note that by construction any edge of $\mathcal{S}$ sees only other edges of $\mathcal{S}$, except for the leftmost edge in $\mathcal{S}_l$ and the rightmost edge in $\mathcal{S}_r$.
In the following lemma we prove that, by adding $\mathcal{S}$ to edge gadgets, we need at least eight flip operations if we transform $\mathcal{F}_s$ into $\mathcal{F}_f$ by using edges of $\mathcal{S}$.

\begin{lemma}\label{lem:separator}
Let $\mathcal{E}$ be an edge gadget, let $\mathcal{F}_s$ and $\mathcal{F}_f$ be the start configuration and the final configuration of its flip structure, respectively, and let $\mathcal{S}$ be the separator of $\mathcal{E}$.
Then any sequence of flips that transforms $\mathcal{F}_s$ into $\mathcal{F}_f$ in $\mathcal{E}$ and involves an edge of $\mathcal{S}$ has length at least $8$.
\end{lemma}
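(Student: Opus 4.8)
The plan is to mirror the structure of the proof of Lemma~\ref{lem:flips_blockers}, exploiting that the separator lies ``behind'' the blocking structure: by construction every line from a separator endpoint to an inner point $f'_i$ of the flip structure crosses at least seven edges of $\mathcal{B}$, and within $\mathcal{E}$ every separator edge sees only other separator edges (the leftmost edge of $\mathcal{S}_l$ and the rightmost edge of $\mathcal{S}_r$ see only into neighboring gadgets, which are irrelevant since we use only points of $\mathcal{E}$). First I would record the elementary fact that flipping two separator edges produces two new edges whose endpoints are again separator points; by the placement of $\mathcal{S}$ these lie behind the blockers, so such a flip crosses no edge of $\mathcal{F}_s$, $\mathcal{F}_f$, or $\mathcal{B}$ and leaves the flip-structure configuration and the blocker edges untouched.

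I would then split the analysis according to how the involved separator edge interacts with the rest of the gadget. In the first case, every flip that touches a separator edge pairs it with another separator edge. Then the separator-only flips and the remaining flips act on disjoint edge sets, so deleting the separator-only flips yields a shorter sequence that transforms $\mathcal{F}_s$ into $\mathcal{F}_f$ using only edges among $\mathcal{B}\cup F$; by Lemma~\ref{lem:flips_blockers} this sub-sequence already has length at least $8$, hence so does the original. In the second case, some flip pairs a separator edge $s$ with a non-separator edge $g$, and I would examine the first such flip. For $s$ to see $g$ at that moment, the seven blocker edges crossed by the corresponding sightline to $f'_i$ must have been removed beforehand, and this removal is achieved only by earlier flips touching $\mathcal{B}$. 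Tracking the crossing number of that fixed sightline and using the counting argument behind Lemma~\ref{lem:crossings} (each flip lowers it by at most two) forces at least four preparatory flips before this moment. Combining this with the graded bounds of Lemma~\ref{lemma:flipping_flip_structure}---four, five, or six flips for the flip structure according to how many non-flip-structure edges participate---the total is again at least $8$; the symmetric possibility that $g$ is itself a blocker edge is handled by the same crossing count, since a separator point can see a blocker point only after clearing the intervening blocker edges.

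The main obstacle I expect is the second case: I must rule out any economical way of repositioning the $2k+2$ parallel separator edges among themselves so as to reach the flip structure without paying the full seven-crossing toll, and I must make the crossing bound rigorous for an intermediate sightline rather than a genuine target edge of $\mathcal{F}_f$. I would address this by fixing, once and for all, the segment joining the relevant separator endpoint to $f'_i$ and arguing directly---via the proof technique of Lemma~\ref{lem:crossings}---that its crossing number starts at at least seven and drops by at most two per flip. The delicate point is the accounting: a preparatory flip might simultaneously involve a flip-structure edge, so rather than claiming the four clearing flips are strictly disjoint from the flip-structure work, I would let any such overlap promote the flip-structure cost through the five- and six-flip cases of Lemma~\ref{lemma:flipping_flip_structure}, so that the bound of $8$ closes in every configuration.
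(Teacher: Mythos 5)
Your case split (whether or not some flip directly pairs a separator-spanned edge with a non-separator edge) differs from the paper's, which instead distinguishes the type of edge that eventually sees an edge of $\mathcal{F}_s$: both endpoints in $\mathcal{S}$, versus one endpoint in $\mathcal{S}$ and one in $\mathcal{B}$. Your first case is sound and even covers a scenario the paper treats only implicitly: when no flip mixes the two point sets, the separator-only flips and the flips on edges spanned by $\mathcal{B}\cup F$ act independently, edges spanned by points of $\mathcal{B}\cup F$ never cross the (vertical, outlying) separator edges, so deleting the separator-only flips leaves a valid sequence to which Lemma~\ref{lem:flips_blockers} applies.

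The genuine gap is in your second case. You claim that before the first flip pairing a separator edge $s$ with a non-separator edge $g$, ``the seven blocker edges crossed by the corresponding sightline to $f'_i$ must have been removed,'' and you extend this to the subcase $g\in\mathcal{B}$ by asserting that a separator point can see a blocker point only after clearing intervening blocker edges. That assertion is false in the paper's construction: the endpoints of the innermost separator edges see blocker points from the very start (the topmost point of the edge $e_ae_b$ of $\mathcal{S}_r$ sees points of $\mathcal{B}_0$ as well as $f_0$ and $f_1$, and its bottommost point sees points of $\mathcal{B}_1$ as well as $f_1$ and $f_2$). What fails initially is only edge-to-edge visibility: no single non-separator edge is seen by both endpoints of $e_ae_b$ simultaneously. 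Consequently, the seven-crossing toll of Lemma~\ref{lem:crossings} applies only to sightlines ending at the inner points $f'_i$, i.e., to the subcase where $g$ is an edge of $\mathcal{F}_s$ (the paper's first case, which yields $4+5=9$); for $g\in\mathcal{B}$ there is no such toll, and your proposed repair (promoting any overlap into the five- and six-flip cases of Lemma~\ref{lemma:flipping_flip_structure}) does not supply the missing flips. The paper closes this case by a different accounting: since separator edges see, \emph{as edges}, only separator edges, the first separator-involving flip creates only separator-spanned edges; a second flip is then needed to create a mixed $\mathcal{S}$--$\mathcal{B}$ edge; these two flips leave at least two edges displaced from the gadget's required configuration; and then Lemma~\ref{lemma:flipping_flip_structure} (five or six flips, according to how many outside edges participate) plus, in the five-flip case, one extra flip to restore the displaced edge, gives $2+5+1=2+6=8$. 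Without this argument, or a substitute that bounds the cost of creating and undoing mixed edges, your bound of $8$ does not follow in the $g\in\mathcal{B}$ subcase.
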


\begin{proof}
By Lemma~\ref{lemma:flipping_flip_structure}, we know that we need at least $5$ flips.
Let $\mathcal{E}'$ be the edge gadget after some flips such that $\mathcal{F}_s \subset \mathcal{E}'$ and an edge of $\mathcal{E}' \setminus \mathcal{F}_s$ sees an edge of $\mathcal{F}_s$.

We distinguish two cases, based on whether the edge seeing $\mathcal{F}_s$ is an edge of $\mathcal{S}$ or not.

\begin{itemize}
     \item \textsc{Case 1:} An edge $v_1v_2$, where $v_1$ and $v_2$ are points of $\mathcal{S}$, sees an edge of $\mathcal{F}_s$.

    By construction, each of $v_1f'_i$ and $v_2f'_i$ is crossed at least seven times in $\mathcal{E}$.
    Hence, we need four flips to get from $\mathcal{E}$ to $\mathcal{E}'$ by \cref{lem:crossings}.
    We also need at least five flips to transform $\mathcal{F}_s$ into $\mathcal{F}_f$ in $\mathcal{E}'$ by \cref{lemma:flipping_flip_structure}.
    Hence, we need at least nine flips in this case.

     \item \textsc{Case 2:} An edge $v_1v_2$ with $v_1$ in $\mathcal{S}$ and $v_2$ in $\mathcal{B}$ sees an edge of $\mathcal{F}_s$.

Observe that any edge of $\mathcal{S}$ only sees edges that belong to $\mathcal{S}$.
This also holds for the leftmost edge $e_ae_b$ in $\mathcal{S}_r$ since the topmost point of $e_ae_b$ only sees $\mathcal{B}_0$, $f_0$ and $f_1$ while the bottommost point of $e_ae_b$ sees $\mathcal{B}_1$, $f_1$ and $f_2$. Similarly, the rightmost edge in $\mathcal{S}_l$ only sees an edges of $\mathcal{S}_l$.
Hence, in the first flip that involves an edge of $\mathcal{S}$ we obtain only edges that connect points of $\mathcal{S}$.
We need a second flip to obtain an edge with one point in $\mathcal{S}$ and one point in $\mathcal{B}$.
Also, observe that we now have two edges that are not in $\mathcal{E}$.
We need at least five flips to transform $\mathcal{F}_s$ into $\mathcal{F}_f$ in $\mathcal{E}'$ by \cref{lemma:flipping_flip_structure}.
If we only needed five flips, then we still have to do at least one more flip to obtain $\mathcal{E}$ with $\mathcal{F}_s$.
Hence, we need at least eight flips in this case. 
\end{itemize}
\end{proof}

\subsection{The vertex gadget}
We now turn our attention on the vertex gadget,  which consists of two parts.
\begin{itemize}
    \item The \emph{vertex frame} $\mathcal{H}$ consists of the bottom-edge, $2k+2$ middle-edges, the top-edge and several connectors. For every incident edge gadget, we have a connector; the precise placement is explained later.
    \item On the left and right side of $\mathcal{H}$ we have the \emph{vertex separators}, denoted as $\mathcal{I}_l$ and $\mathcal{I}_r$, respectively. Each vertex separator consists of $2k+2$ vertical edges, $2k+2$ horizontal edges above them, and $2k+2$ horizontal edges below them.
\end{itemize}
We describe both parts in the following with more details.

\begin{figure}
    \centering
    \includegraphics[page=3]{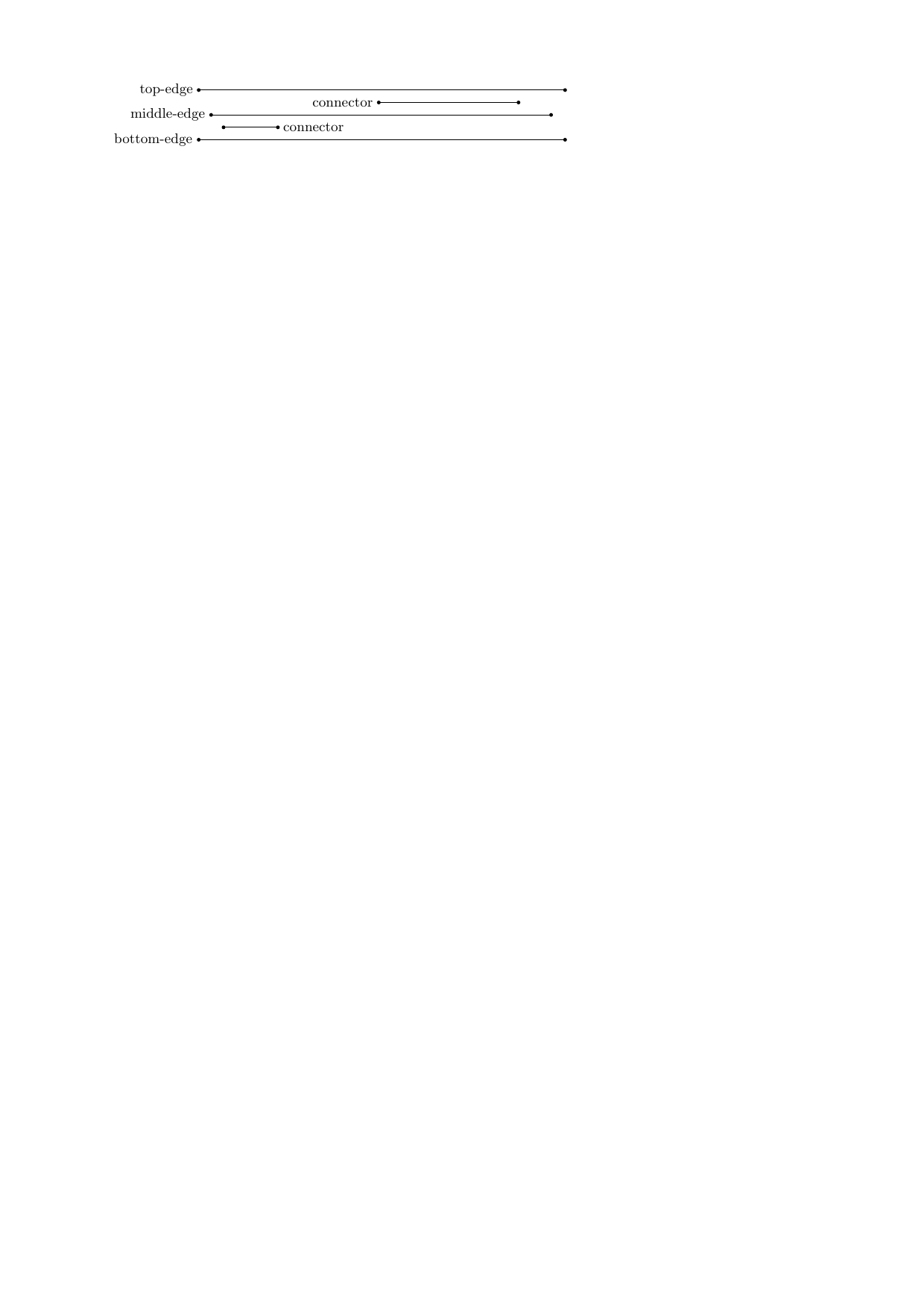}
    \caption{The vertex gadget. The frame $\mathcal{H}$ is drawn in black and the vertex separators $\mathcal{I}$ are drawn in blue. The heavier edges are each a set of $2k+2$ edges.}
    \label{fig:vertex_gadget2}
\end{figure}

\subsubsection{The vertex frame $\mathcal{H}$}
The vertex frame $\mathcal{H}$ consists of the top-edge, the bottom-edge, $2k+2$ middle-edges and several connectors.
The top- and the bottom-edge have the same lenght, while the middle-edges are two units (one per side) shorter than the top-edge. The leftmost (rightmost) points of each of these lines are to the left (right) of the leftmost (rightmost) incident edge gadget, respectively.
The middle-edges are such that the top- and the bottom-edge see each other.
Next, we describe the placement of the connectors, which are the edges that can be used to transform the flip structure in $5$ steps. The connectors are between the top-edge and the hightest middle-edge, or between the lowest middle-edge and the bottom-edge.

\begin{figure}
    \centering
    \includegraphics{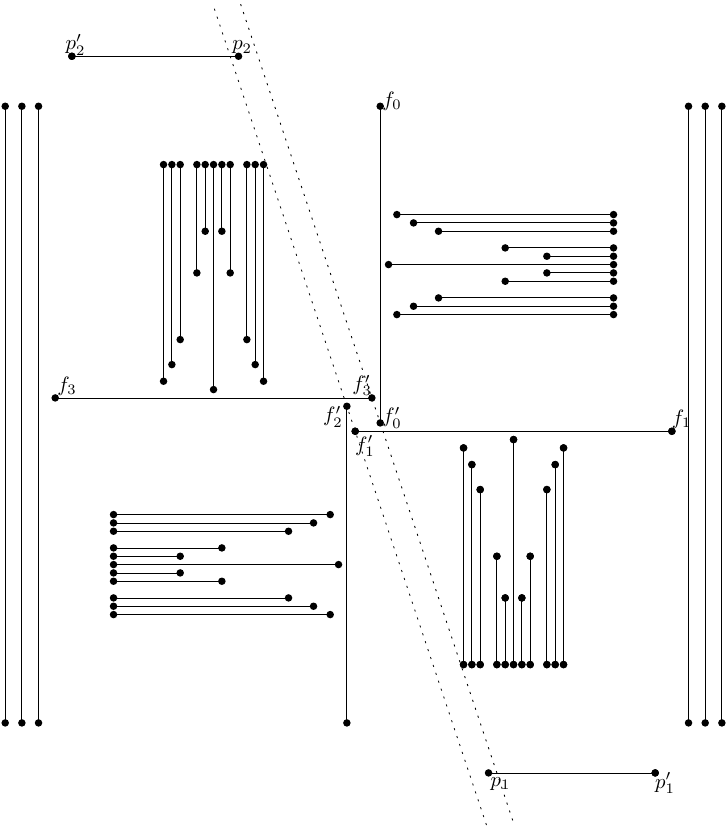}
    \caption{Placement of the connectors. Only the connectors are drawn of each vertex gadget.}
    \label{fig:connectors}
\end{figure}

For the precise placement of the connectors, consider the edge gadget $\mathcal{E}$ corresponding to an edge $e=(v_a,v_b)$ of $G$ and the vertex gadgets $\mathcal{V}_a$ and $\mathcal{V}_b$ corresponding to $v_a$ and $v_b$.
Without loss of generality, let $\mathcal{V}_a$ be above $\mathcal{E}$ and $\mathcal{V}_b$ below it.
We construct the connectors of $\mathcal{V}_a$ and $\mathcal{V}_b$.
See \cref{fig:connectors} for a picture of the edge gadget and the connectors.

The lines  $f'_0f'_3$ and $f'_1f'_2$ are parallel.
Note that no edge of $\mathcal{B}$ crosses any of these two lines.
For the connector in $\mathcal{V}_b$, we place a point $p_1$ between the lines  $f'_0f'_3$ and $f'_1f'_2$ and between the middle- and top-edge. For the second point of the connector, we place a point $p'_1$ on the same height as $p_1$ and below $f_1$.
We construct the connector in $\mathcal{V}_a$ similarly, but place the connector $p_2p'_2$ between the bottom- and middle-edge, and place the second point $p'_2$ above $f_3$.

Now $p_1$ has the property that the line segments $p_1f'_1$, $p_1f'_2$, $p_1f'_3$, and $p_1f'_0$ are in clockwise order around $p_1$. Hence, $p_1$ sees all points $f'_i$ ($0\leq i \leq 3$) in $\mathcal{F}_f$.
Furthermore, $p_1p'_1$ sees $f_1f'_1$.
Analogously, $p_2$ sees all points $f'_i$ ($0\leq i \leq 3$) in $\mathcal{F}_f$, and $p_2p'_2$ sees $f_3f'_3$.
So, we can transform $\mathcal{F}_s$ into $\mathcal{F}_f$ in five steps as in \cref{fig:5flips} if $p_1p'_1$ sees $f_1f'_1$ or $p_2p'_2$ sees $f_3f'_3$. 

\subsubsection{The vertex separators $\mathcal{I}$}

We describe now the placement of the vertex separators of a vertex gadget $\mathcal{V}$.
Since they are constructed in the same way on both sides, we describe how the left vertex separator $\mathcal{I}_l$ is placed.
Let $\mathcal{E}_{a}$ and $\mathcal{E}_{b}$ be the leftmost edge gadgets above and below the vertex frame $\mathcal{H}$.

We place the $2k+2$ vertical edges of the vertex separator $\mathcal{I}_l$ one unit to the left of $\mathcal{H}$.
The topmost points of these edges of $\mathcal{I}_l$ are one unit below the middle between $\mathcal{H}$ and $\mathcal{E}_a$, and the bottommost points of these edges of $\mathcal{I}_l$ are one unit above the middle between $\mathcal{H}$ and $\mathcal{E}_b$.
We place a set of $2k+2$ horizontal edges above and below the vertical edges.
The leftmost points of each of these edges is to the left of the vertical edges, whereas the rightmost points are close to $\mathcal{E}_a$ or $\mathcal{E}_b$, respectively. 

Note that, by construction, every edge between a point of $\mathcal{I}$ and a point of a flip structure $\mathcal{F}_s$ either crosses a separator $\mathcal{S}$ of an edge gadget or the middle edges of a vertex gadget.

\begin{lemma}
Let $u_1v_1$ be the top or bottom edge of the frame of a vertex gadget $\mathcal{V}_1$ and let $u_2v_2$ be the top or bottom edge of the frame of a vertex gadget $\mathcal{V}_2$.
Then it takes at least $k$ flips such that $u_1v_1$ and $u_2v_2$ see each other.
\end{lemma}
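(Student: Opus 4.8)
The plan is to reduce the statement to a crossing count and then invoke the argument of \cref{lem:crossings}. First I would observe that, by the definition of two edges seeing each other, if in some matching $\mathcal{M}'$ reachable from \ms the edges $u_1v_1$ and $u_2v_2$ see each other, then there is an endpoint $p\in\{u_1,v_1\}$ and an endpoint $q\in\{u_2,v_2\}$ such that the segment $\sigma=\overline{pq}$ is crossing-free in $\mathcal{M}'$. Since the proof of \cref{lem:crossings} uses only that a single flip changes two edges and hence alters the number of matching edges crossing a \emph{fixed} segment by at most two, it applies verbatim to $\sigma$ even though $\sigma$ need not be an edge of $\mathcal{M}'$: if $\sigma$ crosses $x$ edges of \ms, then at least $x/2$ flips are needed to reach a matching in which $\sigma$ is crossing-free. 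Because the four candidate segments between $\{u_1,v_1\}$ and $\{u_2,v_2\}$ all join a point of the frame of $\mathcal{V}_1$ to a point of the frame of $\mathcal{V}_2$, it suffices to prove that \emph{every} such segment crosses at least $2k+2$ edges of \ms; this yields $x/2\ge k+1\ge k$ irrespective of which pairing realises the visibility.

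The heart of the argument is therefore the geometric claim that any segment $\sigma$ joining a frame point of $\mathcal{V}_1$ to a frame point of $\mathcal{V}_2$ must cross a full wall of $2k+2$ parallel edges belonging either to a vertex separator $\mathcal{I}$, to an edge-gadget separator $\mathcal{S}$, or to the set of middle-edges of some vertex gadget. To establish it I would use that, in the weak-visibility representation $R$, the two vertex-segments lie at distinct heights, so we may assume $\mathcal{V}_1$ is above $\mathcal{V}_2$, and then distinguish two cases according to whether their horizontal ranges overlap. If they do not overlap, $\sigma$ runs essentially sideways and must traverse one of the vertical walls separating the two frames, namely the $2k+2$ vertical edges of an $\mathcal{I}_l$ or $\mathcal{I}_r$, or the $2k+2$ vertical edges of an intervening edge-gadget separator $\mathcal{S}$. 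If their horizontal ranges overlap, $\sigma$ runs essentially vertically and must leave the enclosure of $\mathcal{V}_1$ downwards (or enter that of $\mathcal{V}_2$ upwards), thereby crossing the $2k+2$ horizontal edges of a vertex separator, the $2k+2$ middle-edges of a vertex gadget, or the separators of an edge gadget lying between the two frames. In each subcase the relevant wall consists of $2k+2$ parallel segments that, by construction, span the whole extent crossed by $\sigma$, so that $\sigma$ meets all of them.

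The main obstacle I anticipate is exactly this last geometric claim, and in particular ruling out that $\sigma$ sneaks through a gap without meeting a full wall. The delicate point is that within a single frame the top-edge and the bottom-edge see each other through a corridor left open in the middle-edges; I must argue that this corridor is internal to one gadget and cannot be chained across two different gadgets without crossing a separator, using that the endpoints $u_i,v_i$ of the top- and bottom-edges are the extreme corners of the frames and that the vertex separators together with the edge-gadget separators enclose each frame on all sides. Once the crossing count of $2k+2$ is secured, the conclusion is immediate: any flip sequence that makes $u_1v_1$ and $u_2v_2$ see each other must clear all these crossings of the relevant segment, and hence has length at least $(2k+2)/2=k+1\ge k$.
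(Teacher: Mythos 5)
Your proposal follows essentially the same route as the paper's own proof: both reduce the visibility condition to a crossing count against a wall of $2k+2$ parallel edges (vertex separators, edge-gadget separators, or middle-edges) and then use the fact that a single flip removes at most two crossings of a fixed segment, i.e., the argument of \cref{lem:crossings}. If anything, you are more careful than the paper, which only considers the pairing $u_1u_2$, $v_1v_2$ and asserts the $2k+2$ crossings of $u_1u_2$ with the vertex separator $\mathcal{I}$ of $\mathcal{V}_2$ ``by construction,'' whereas you treat all four candidate segments (thus covering both pairings allowed by the definition of \emph{sees}) and explicitly justify applying the crossing-count argument to a segment that need not be an edge of any matching.
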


\begin{proof}
Let $u_1$ be the leftmost point of $e_1$ and let $u_2$ be the leftmost point of $e_2$.
    Since $e_1$ and $e_2$ define a convex $4$-gon, we get $u_1u_2$ and $v_1v_2$ by flipping $e_1$ with $e_2$. 

    Assume $u_2$ is to the right of $u_1$.
    Then $u_1u_2$ crosses $2k+2$ edges of $\mathcal{I}$ of $\mathcal{V}_2$ by construction.
    By Lemma~\ref{lem:2k_crossings}, we need $k$ flips to obtain a configuration where we can flip $e_1$ with $e_2$.
\end{proof}

\subsection{Seeing edge with point not in edge gadget}
So far, we have shown that we need at least $8$ flips to transform $\mathcal{F}_s$ into $\mathcal{F}_f$ if the edge that sees an edge of $\mathcal{F}_s$ connects two points of the corresponding edge gadget.
Now, we study the case where there is an edge $p_1p_2$ that sees an edge $f_if'_i$ of an initial edge gadget $\mathcal{E}$ and $p_1$ is not a point of $\mathcal{E}$.

\begin{lemma}\label{lem:2k_crossings}
    Let $p_1$ be a point of a blocking structure $\mathcal{B}$ or of the flip structure $\mathcal{F}_s$ of an edge gadget $\mathcal{E}$, and let $p_2$ be a point in $\ms$ not in $\mathcal{E}$ or an adjacent vertex gadget.
    Then $p_1p_2$ crosses at least $2k+2$ edges in $\ms$.
\end{lemma}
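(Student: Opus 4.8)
The plan is to view the core of the edge gadget $\mathcal{E}$ --- the points of its flip structure $\mathcal{F}_s$ and of its blocking structure $\mathcal{B}$, where $p_1$ lives --- as enclosed by a barrier assembled from walls, each wall being a bundle of $2k+2$ parallel edges. On the left and right the barrier consists of the two separators $\mathcal{S}_l$ and $\mathcal{S}_r$, whose $2k+2$ vertical edges span the whole vertical extent of $\mathcal{E}$, from above $f_0$ to below $f_2$. Above and below, $\mathcal{E}$ abuts the adjacent vertex gadgets $\mathcal{V}_a$ and $\mathcal{V}_b$, and there the barrier is completed by their middle-edges and vertex separators $\mathcal{I}_l,\mathcal{I}_r$, each again a bundle of $2k+2$ parallel edges. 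Since $p_1$ lies strictly inside this barrier while $p_2$ lies outside $\mathcal{E}$ and outside both $\mathcal{V}_a$ and $\mathcal{V}_b$, the straight segment $\overline{p_1p_2}$ must pass entirely through at least one wall, and I would show that passing through a wall forces crossing all $2k+2$ of its parallel edges.

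Concretely, I would argue by a case analysis on the direction in which $\overline{p_1p_2}$ leaves $\mathcal{E}$. If it leaves through a vertical side, it crosses $\mathcal{S}_l$ or $\mathcal{S}_r$; since $p_1$ lies within the vertical band covered by that separator and $p_2$ lies on its far side, the segment meets every one of the $2k+2$ vertical edges and the bound follows at once. If instead $\overline{p_1p_2}$ leaves through the top (the bottom being symmetric), it enters $\mathcal{V}_a$, and because $p_2 \notin \mathcal{V}_a$ it must also leave $\mathcal{V}_a$; by the placement of the vertex gadget it can do so only by crossing either the $2k+2$ middle-edges or a vertex separator of $\mathcal{V}_a$, each a wall of $2k+2$ edges. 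This is the same mechanism recorded in the observation after the description of $\mathcal{I}$, that any edge joining a point of a vertex separator to a point of a flip structure must cross a separator $\mathcal{S}$ or the middle-edges of a vertex gadget.

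The geometric fact behind ``passing through a wall crosses all its edges'' is that within each bundle the $2k+2$ edges are parallel with overlapping extents --- for the edge-gadget separators, all vertical edges run from above $f_0$ to below $f_2$ --- so a straight segment entering on one side and exiting on the other meets every member. I would make this precise using these span conditions together with the fact that the two innermost edges of each separator are one unit longer per side; this extra length guarantees coverage near the boundary of a wall, so the escaping segment cannot slip past the last edge of the bundle while still leaving the region.

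The main obstacle I expect is the corner cases: a diagonal segment $\overline{p_1p_2}$ that tries to escape through the region between the top of $\mathcal{E}$ and the bottom of $\mathcal{V}_a$ without fully penetrating either the edge-gadget separators or the vertex-gadget walls. Ruling these out is exactly where the fine placement of the construction is needed --- that the horizontal edges of each vertex separator $\mathcal{I}$ extend rightward to lie close to the incident edge gadgets $\mathcal{E}_a,\mathcal{E}_b$, and that their topmost and bottommost points are set one unit inside the midlines between $\mathcal{H}$ and the neighbouring gadgets. I would check that these overlaps leave no uncovered corridor, so that in every case the segment is forced to traverse at least one complete wall of $2k+2$ parallel edges, which proves the claim.
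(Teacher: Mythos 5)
Your proposal follows essentially the same route as the paper's proof: it reduces the claim to showing that $\overline{p_1p_2}$ must fully traverse one of the walls of $2k+2$ parallel edges surrounding $\mathcal{E}$ (the separators $\mathcal{S}_l,\mathcal{S}_r$, the middle-edges of $\mathcal{V}_a$ or $\mathcal{V}_b$, or their vertex separators $\mathcal{I}$), with a case analysis on the escape direction and the overlap of adjacent walls ruling out any uncovered corridor at the corners. The corner cases you flag as the main obstacle are precisely what the paper resolves with its observations that the bottommost points of $\mathcal{I}$ lie below the topmost points of $\mathcal{S}$, so your plan is sound and matches the paper's argument.
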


\begin{proof}
    Let $\mathcal{V}_a$ and $\mathcal{V}_b$ be the above and below vertex gadgets incident to $\mathcal{E}$

   We prove that the line $\ell$ spanning $p_1$ and $p_2$ crosses the separator of $\mathcal{E}$, the middle edges  of $\mathcal{V}_a$ or $\mathcal{V}_b$, or the vertex separators of $\mathcal{V}_a$ or $\mathcal{V}_b$.

    Proof of claim:
    Without loss of generality, assume that $p_2$ is higher and to the right of $p_1$.
    Note that, by construction, any edge between any of the topmost points of the vertex separator $\mathcal{I}$ of $\mathcal{V}_a$ and a point of $\mathcal{E}$ crosses all the middle edges of $\mathcal{V}_a$.
    Also, note that $p_1$ is higher than the bottommost point of $\mathcal{S}$.
    Hence, if $\ell$ does not cross $\mathcal{S}$, then $\mathcal{S}$ is below $\ell$.
    On the other hand, the bottommost points of $\mathcal{I}$ of $\mathcal{V}_a$ are lower than the topmost points of $\mathcal{S}$.
    Hence, if $\ell$ does not cross $\mathcal{S}$ or $\mathcal{I}$, then $\mathcal{S}$ and $\mathcal{I}$ are below $p_1p_2$.
    It follows that, in this case, $\ell$ crosses the middle edges of $\mathcal{V}_a$ if $\ell$ does not cross $\mathcal{S}$ or $\mathcal{I}$.

So the only case left to study is if $\mathcal{I}$ and $\mathcal{S}$ are not between $p_1$ and $p_2$.
If $p_1$ and $p_2$ are on different sides of $\mathcal{V}_a$, then the line segment $p_1p_2$ crosses all the middle edges of $\mathcal{V}_a$.
If $p_1$ and $p_2$ are on the same side of $\mathcal{V}_a$, then $p_1p_2$ crosses $\mathcal{S}$ since $p_2$ is at most as high as the topmost point of $\mathcal{E}_1$ and $p_1$ is below the topmost point of $\mathcal{E}_1$.

Each of $\mathcal{S}_r, \mathcal{I}_r$ and the middle edges of $\mathcal{V}_a$ consists of $2k+2$ edges.
Hence, the line segment $p_1p_2$ crosses at least $2k+2$ edges.
\end{proof}



\end{document}